\DeclareMathOperator{\Tr}{\mathrm{Tr}}
\newtheorem{theorem}{Theorem}
\newtheorem{lemma}{Lemma}%
\newtheorem{corollary}{Corollary}%
\newcommand{\argmin}{\mathop{\rm argmin}\limits}
\def\supp{\mathop{\rm supp}}
\def\X{{\mathcal{X}}}
\def\Y{{\mathcal{Y}}}
\def\H{{\mathcal{H}}}
\def\P{{\mathcal{P}}}
\def\QED{\mbox{\rule[0pt]{1.5ex}{1.5ex}}}
\newcommand{\qed}{\hfill \QED}
\def\Label#1{\label{#1}\ [\ \text{#1}\ ]\ }
\def\Label{\label}
\begin{document}
\title{Analytical algorithm for capacities of classical and classical-quantum channels}
\author{
Masahito~Hayashi~\IEEEmembership{Fellow,~IEEE}
\thanks{
The author was supported in part by the National Natural Science Foundation of China (Grant No. 62171212) and
Guangdong Provincial Key Laboratory (Grant No. 2019B121203002).
The material in this paper was
presented in part at the 2022 IEEE International Symposium on Information
Theory (ISIT 2022), Aalto University, Espoo, Finland, 26 June -- 1 July, 2022. \cite{ISIT2022}.}
\thanks{Masahito Hayashi is with 
Shenzhen Institute for Quantum Science and Engineering, Southern University of Science and Technology, Nanshan District,
Shenzhen, 518055, China,
International Quantum Academy (SIQA), Futian District, Shenzhen 518048, China,
Guangdong Provincial Key Laboratory of Quantum Science and Engineering,
Southern University of Science and Technology, Nanshan District, Shenzhen 518055, China,
and
Graduate School of Mathematics, Nagoya University, Nagoya, 464-8602, Japan.
(e-mail:hayashi@sustech.edu.cn, masahito@math.nagoya-u.ac.jp)}}
\date{}
\markboth{M. Hayashi: Analytical algorithm for capacities of classical and classical-quantum channels}{}

\maketitle
\begin{abstract}
We derive an analytical algorithm for the channel capacity of a classical channel without any iteration, 
while its existing algorithms require iterations and the number of iterations depends on the required precision level.
Hence, our algorithm is its first analytical algorithm for this task without any iteration,
while this algorithm needs several conditions for the channel. 
We apply the obtained algorithm to examples, and see 
how the obtained algorithm works in these examples.
Then, we extend it to the channel capacity of a classical-quantum (cq-) channel.
Many existing studies proposed algorithms for a cq-channel and
all of them require iterations.
Our extended analytical algorithm has also no iteration, and outputs the exactly optimum value.
\end{abstract}

\begin{IEEEkeywords}
mutual information,
maximization,
channel capacity,
classical-quantum channel,
analytical algorithm
\end{IEEEkeywords}

\section{Introduction}\Label{S1}
One of the key problems in classical and quantum information theory is the maximization of information quantities.
However, it is not so easy to perform such a maximization analytically because 
all of existing methods require certain iterations, whose number depends on the required precision level.
The most common maximization problem is 
the channel capacity, which is given as the maximization of mutual information \cite{Shannon},
and its calculation has been studied by Arimoto \cite{Arimoto}, Blahut \cite{Blahut},
and their related studies \cite{Matz,Yu,SSML}.
However, these are iterative approximation algorithms to calculate the maximum of the mutual information.
In addition, the reference \cite{Chiang} calculated only its upper bound and
the references \cite{Huang,NWS} developed other type of method to approximately calculate it.
Hence, they cannot calculate the exact value for the channel capacity.
As variants, the references \cite{Yasui,YSM} extended the above method to 
the wire-tap capacity \cite{Wyner,CK79} when the wire-tap channel is degraded.
Also, the references \cite{Nagaoka,Dupuis,Sutter,Li,RISB} extended it to 
the quantum setting, so called the capacity of classical-quantum channel.
However, these results are also iterative approximation algorithms.

This paper proposes an algorithm to analytically calculate the channel capacity of the classical channel
without iteration.
The proposed algorithm is composed of 
solving simultaneous linear equations and calculation of logarithm and exponential
because it employs an information-geometrical structure.
However, the proposed method works under certain conditions.
Since our method is analytical, 
we can derive several analytical formulas for the capacity when these conditions are satisfied.
Then, to see this possibility,
we apply our algorithm to several examples, and derive analytical expressions
of the capacities in these examples. 
Further,
we extend our analytical algorithm to the calculations of the capacity of classical-quantum channel.

The remaining part of this paper is organized as follows.
First, Section \ref{S10} derives our algorithm for the capacity of a classical channel.
Section \ref{S3} applies the obtained result to several examples.
Next, Section \ref{S12} extends this method to the capacity of classical-quantum channel.
Finally, 
Section \ref{SC} discusses the merit and the demerit of our method over existing methods.

\section{Capacity of classical channel }\Label{S10}
We consider the input and output alphabets
$\X:=\{1,\ldots,n_1 \}$ and $\Y:= \{1,\ldots, n_2 \}$ that are finite sets. 
We denote the sets of probability distributions on $\X$ and $\Y$ by
${\cal P}_{\X}$ and ${\cal P}_{\Y}$, respectively.
For distributions $P,Q \in {\cal P}_{\X}$, 
the entropy $H(P)$ and the divergence $D(P\|Q)$ are defined as
\begin{align}
H(P):= &-\sum_{x \in \X} P(x)\log P(x),  \\
D(P\|Q):= &\sum_{x \in \X} P(x)\log \frac{P(x)}{Q(x)}.
\end{align}
Throughout this paper, the base of the logarithm is chosen to be 
the natural logarithm.

A channel from $\X$ to $\Y$ is given as conditional distribution on $\Y$ conditioned with $\X$.
That is, using the notation $W_x(y):=W(y|x)$, it can be considered as 
a map  $ W  :\X  \rightarrow  {\cal P}_{\Y}$.
For $Q_X \in {\cal P}_{\X}$ and $Q_Y \in {\cal P}_{\Y}$, 
$W \cdot Q_X \in {\cal P}_{\Y}$,
$W \times Q_X \in {\cal P}_{\X \times \Y}$, and $Q_X\times Q_Y \in \mathcal{P}_{\X \times \Y}$ 
are defined by 
$(W \cdot Q_X)(x,y):=\sum_{x \in \X}W(y|x)Q_X(x)$, 
$(W \times Q_X)(x,y):=W(y|x)Q_X(x)$, and $(Q_Y \times Q_X)(x,y) := Q_X(x)Q_Y(y)$, respectively. 

The channel capacity of a channel  $W$ is given by \cite{Shannon}, \cite[p.124]{CK}
\begin{align}
C(W):=&\max_{Q_X \in {{\cal P}_{\X}} }
\sum_{x\in \X}Q_X(x) D(W_x \| W \cdot Q_X  ) \nonumber \\
=& \min_{Q_Y \in {\cal P}_{\Y} }\max_{x \in \X} D( W_x\| Q_Y) \nonumber \\
=& \min_{Q_X \in {\cal P}_{\X} }\max_{x \in \X} D( W_x\| W\cdot Q_X) 
.\Label{MOA}
\end{align}
To discuss $C(W)$, we assume the following conditions.
\begin{description}
\item[(A)]
$W_1, \ldots, W_{n_1}$ are linearly independent.
\end{description}
Then, we have the following:
\begin{lemma}\Label{LAP}
When a distribution $Q_Y=W\cdot Q_X$ realizes the minimum in \eqref{MOA}, 
it satisfies the following condition:
$ D( W_x\| Q_Y) $ does not depend on $x \in \supp(Q_X)$.
\end{lemma}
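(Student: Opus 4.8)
\emph{Proof plan.} My plan is to prove this via the classical first-order variational argument that characterizes a capacity-achieving input distribution (a Karush--Kuhn--Tucker condition). First I would fix a distribution $Q_X\in\P_\X$ attaining the maximum in the first line of \eqref{MOA} and set $Q_Y:=W\cdot Q_X$, so that $\sum_{x\in\X}Q_X(x)D(W_x\|Q_Y)=C(W)$. Since $C(W)<\infty$, every summand with $Q_X(x)>0$ is finite, i.e.\ $\supp(W_x)\subseteq\supp(Q_Y)$ for all $x\in\supp(Q_X)$; I would record this so that the logarithms appearing below are well defined.

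The heart of the argument is a derivative computation. I would pick arbitrary $x_1,x_2\in\supp(Q_X)$, put $v:=\delta_{x_2}-\delta_{x_1}$ (a difference of point masses on $\X$), and note that $Q_X+\theta v\in\P_\X$ for all $\theta$ in a neighbourhood of $0$, with $\theta=0$ interior. Then I would study
\begin{align}
\Phi(\theta):=\sum_{x\in\X}\bigl(Q_X(x)+\theta v(x)\bigr)\,D\bigl(W_x\,\big\|\,W\cdot(Q_X+\theta v)\bigr).
\end{align}
Because $\supp(v)\subseteq\supp(Q_X)$, one has $(W\cdot v)(y)=0$ whenever $Q_Y(y)=0$, so $W\cdot(Q_X+\theta v)$ keeps the support of $Q_Y$ for small $\theta$ and $\Phi$ is smooth near $0$. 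Differentiating at $\theta=0$: the contribution of the explicit weights is $\sum_x v(x)D(W_x\|Q_Y)$, and the contribution of the argument $W\cdot(Q_X+\theta v)$ inside the divergence is
\begin{align}
-\sum_{x\in\X}Q_X(x)\sum_{y\in\supp(Q_Y)}W(y|x)\,\frac{(W\cdot v)(y)}{Q_Y(y)}
=-\sum_{y\in\supp(Q_Y)}(W\cdot v)(y)=-\sum_{x\in\X}v(x)=0,
\end{align}
the first equality using $\sum_{x}Q_X(x)W(y|x)=Q_Y(y)$. Hence $\Phi'(0)=\sum_x v(x)D(W_x\|Q_Y)=D(W_{x_2}\|Q_Y)-D(W_{x_1}\|Q_Y)$.

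To conclude, I would observe that each $Q_X+\theta v$ is a feasible input distribution, so $\Phi(\theta)\le C(W)=\Phi(0)$ for all small $\theta$; thus $\theta=0$ is an interior maximizer of the smooth function $\Phi$, forcing $\Phi'(0)=0$, i.e.\ $D(W_{x_1}\|Q_Y)=D(W_{x_2}\|Q_Y)$. Since $x_1,x_2\in\supp(Q_X)$ are arbitrary, $D(W_x\|Q_Y)$ is constant on $\supp(Q_X)$. I expect the only delicate step to be the derivative computation above: one must check that the coordinates $y\notin\supp(Q_Y)$ genuinely drop out (this is exactly where $\supp(v)\subseteq\supp(Q_X)$ is used) and that the finite sums may be differentiated termwise. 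Condition~(A) is in fact not needed for this lemma---it is the section's standing hypothesis, invoked later to turn this characterization into an explicit algorithm; should one instead read ``$Q_Y=W\cdot Q_X$ realizes the maximum'' as ``$Q_Y$ attains the outer minimum in the minimax forms of \eqref{MOA}'', then (A) makes the map $Q_X\mapsto W\cdot Q_X$ injective, which pins down the associated capacity-achieving $Q_X$, after which the same computation applies.
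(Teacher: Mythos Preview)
Your proof is correct. The derivative computation is clean and the interior-point argument is exactly the standard KKT characterization of a capacity-achieving input distribution; the support bookkeeping you flag (that $(W\cdot v)(y)=0$ whenever $Q_Y(y)=0$ because $\supp(v)\subseteq\supp(Q_X)$) is the only place care is needed, and you handle it.

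The paper, however, argues differently. It works by contradiction via the \emph{third} (minimax) form of \eqref{MOA}: assuming the divergences $D(W_x\|Q_Y)$ are not constant on $\supp(Q_X)$, it shifts a small amount of mass in $Q_X$ away from inputs $x$ attaining $\max_{x}D(W_x\|Q_Y)$ and toward inputs with strictly smaller divergence, and asserts that this strictly decreases $\max_x D(W_x\|W\cdot Q_{X,\epsilon})$, contradicting optimality of $Q_X$ in the min--max. Thus the paper implicitly leans on the equivalence between the maximizer of the mutual information and the minimizer in the min--max, and on a monotonicity claim for the perturbed divergences that is stated rather than computed. Your route avoids both of these: you stay entirely within the first (max) formulation, make the first-order variation explicit, and never invoke the min--max form. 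The price is a short calculus computation; the gain is that nothing is left to intuition, and as you observe, Condition~(A) plays no role in the argument itself.
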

Lemma \ref{LAP} is shown in Appendix \ref{A1}.
We define the set ${\cal M}_{0}$ as
\begin{align}
{\cal M}_{0}:= \Big\{ Q_Y \in \P_{\Y}\Big|
Q_Y= \sum_{x \in \X } c(x)W_{x}, \quad
\sum_{x \in \X } c(x)=1\Big\}.
\end{align}
Here, the condition $c(x)\ge 0$ is not imposed.
Hence, ${\cal M}_{0}$ is characterized by as linear constraints, which will be 
explained in Appendix \ref{ALL2}.
Then, we introduce another condition for the distribution $Q_Y$:
\begin{description}
\item[(B)]
$ D( W_x\| Q_Y) $ does not depend on $x \in \X$.
\end{description}

\begin{lemma}\Label{LL2}
When Condition (A) holds, 
only one distribution $Q_Y \in {\cal M}_{0}$ satisfies Condition (B).
\end{lemma}
Lemma \ref{LL2} is shown in Appendix \ref{ALL2}.

In the following, we denote the element of ${\cal M}_{0}$ to satisfy the condition (B) by  $Q_{Y,*}$.
Since $Q_{Y,*}$ belongs to ${\cal M}_{0}$,
there exists a function $\widehat{Q}_{X,*}$ on $\X$ as the solution of the following equation:
\begin{align}
\sum_{x\in \X} W(y|x)\widehat{Q}_{X,*}(x)= Q_{Y,*}(y).\Label{XLR}
\end{align}
Condition (A) guarantees the uniqueness of $\widehat{Q}_{X,*}$.
We have the following theorem:
\begin{theorem}\Label{T0}
Assume Condition (A).
The following conditions are equivalent
\begin{description}
\item[(i)]The relation $D( W_x\| Q_{Y,*})=C(W)$ holds.
\item[(ii)] The function $\widehat{Q}_{X,*} $ satisfies
the condition
\begin{align}
\widehat{Q}_{X,*}(x) \ge 0 \hbox{ for }x \in \X.
\Label{MMR}
\end{align}
\end{description}
\end{theorem}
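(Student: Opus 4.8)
The plan is to reduce the equivalence to two standard facts together with the uniqueness clause of Condition~(A). Throughout, write $c$ for the common value $D(W_x\|Q_{Y,*})$ furnished by Condition~(B); this value is finite, since otherwise $\supp W_x\not\subseteq\supp Q_{Y,*}$ for some $x$, whereas Condition~(B) with a finite value forces $\supp Q_{Y,*}\supseteq\bigcup_{x\in\X}\supp W_x$, and this is the case in which $Q_{Y,*}$ is the distinguished element of ${\cal M}_0$. The first fact I would use is that plugging $Q_Y=Q_{Y,*}$ into the second line of \eqref{MOA} gives at once
\begin{align}
C(W)=\min_{Q_Y\in\P_\Y}\max_{x\in\X}D(W_x\|Q_Y)\le \max_{x\in\X}D(W_x\|Q_{Y,*})=c .
\Label{T0ub}
\end{align}
The second fact is the chain-rule (``golden formula'') identity
\begin{align}
\sum_{x\in\X}Q_X(x)D(W_x\|Q_Y)=\Big(\sum_{x\in\X}Q_X(x)D(W_x\|W\cdot Q_X)\Big)+D(W\cdot Q_X\|Q_Y),
\Label{T0gf}
\end{align}
valid for every $Q_X\in\P_\X$ and every $Q_Y\in\P_\Y$ (with both sides finite when $\supp(W\cdot Q_X)\subseteq\supp Q_Y$); it follows from writing $\log\frac{W(y|x)}{Q_Y(y)}=\log\frac{W(y|x)}{(W\cdot Q_X)(y)}+\log\frac{(W\cdot Q_X)(y)}{Q_Y(y)}$ and summing against $W(y|x)Q_X(x)$.

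For the direction (ii)$\Rightarrow$(i) I would argue by achievability. If \eqref{MMR} holds, then $\hat{Q}_{X,*}$ is a genuine input distribution, and \eqref{XLR} says precisely $W\cdot\hat{Q}_{X,*}=Q_{Y,*}$; hence the mutual information at the input $\hat{Q}_{X,*}$ equals $\sum_{x\in\X}\hat{Q}_{X,*}(x)D(W_x\|W\cdot\hat{Q}_{X,*})=\sum_{x\in\X}\hat{Q}_{X,*}(x)D(W_x\|Q_{Y,*})=c$, using Condition~(B) and $\sum_x\hat{Q}_{X,*}(x)=1$. Thus $C(W)\ge c$ by the first line of \eqref{MOA}, and comparing with \eqref{T0ub} gives $C(W)=c=D(W_x\|Q_{Y,*})$, i.e.\ (i).

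For the direction (i)$\Rightarrow$(ii) I would feed a capacity-achieving input into \eqref{T0gf}. Assume $c=C(W)$, and choose $Q_X^{*}\in\P_\X$ attaining the maximum in the first line of \eqref{MOA} (a maximizer exists by compactness of $\P_\X$ and continuity of the mutual information), so $\sum_{x\in\X}Q_X^{*}(x)D(W_x\|W\cdot Q_X^{*})=c$. Apply \eqref{T0gf} with $Q_X=Q_X^{*}$ and $Q_Y=Q_{Y,*}$: the left-hand side is $\sum_{x\in\X}Q_X^{*}(x)\,c=c$ by Condition~(B), while the right-hand side is $c+D(W\cdot Q_X^{*}\|Q_{Y,*})$. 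Hence $D(W\cdot Q_X^{*}\|Q_{Y,*})=0$, so $W\cdot Q_X^{*}=Q_{Y,*}$, i.e.\ $Q_X^{*}$ is a solution of \eqref{XLR}; by the uniqueness guaranteed by Condition~(A), $\hat{Q}_{X,*}=Q_X^{*}\in\P_\X$, which is \eqref{MMR}.

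I expect no serious obstacle: the one nonroutine step is to test the golden formula \eqref{T0gf} against the particular distribution $Q_{Y,*}$, after which its left-hand side collapses to the constant $c$ and pins down $W\cdot Q_X^{*}=Q_{Y,*}$ — an instance of the information-geometric Pythagorean identity, and entirely consistent with Lemma~\ref{LAP}, which already says every capacity-achieving output has $D(W_x\|\cdot)$ constant on the optimal support. (One could instead verify the Kuhn--Tucker conditions for the concave program $\max_{Q_X}\sum_x Q_X(x)D(W_x\|W\cdot Q_X)$, but the route above needs only the existence of a maximizer.) The remaining care is bookkeeping about finiteness of the divergences, which is automatic from the support inclusion implied by Condition~(B) noted above.
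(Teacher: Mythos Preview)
Your proof is correct, and it is genuinely different from the paper's argument in Appendix~\ref{A2}. For (ii)$\Rightarrow$(i), the paper argues that $\hat{Q}_{X,*}$ is the unique minimizer of $Q_X\mapsto\max_x D(W_x\|W\cdot Q_X)$, whereas you sandwich $C(W)$ between the achievable value $c$ (from $\hat{Q}_{X,*}\in\P_\X$) and the upper bound \eqref{T0ub}; your route is more self-contained. For (i)$\Rightarrow$(ii), both proofs aim at $W\cdot Q_X^{*}=Q_{Y,*}$ and then invoke the uniqueness in Condition~(A), but the paper gets there through the information-geometric machinery of Appendix~A (Lemma~\ref{LAP}, the $m$-projection $\Gamma_{{\cal M}_0}^{(m)}$, and the Pythagorean relation), while you obtain it in one line from the chain-rule identity \eqref{T0gf}. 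The Pythagorean theorem and your golden formula are, of course, two faces of the same identity here, but your use of it is more direct: it avoids Lemma~\ref{LAP} entirely and does not require introducing any projection. One small remark: rather than arguing about support inclusions, you can simply note that $Q_{Y,*}=P_{\theta_*,Y}$ has strictly positive entries by \eqref{nat}, so every divergence in sight is finite and \eqref{T0gf} applies without caveat.
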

Theorem \ref{T0} is shown in Appendix \ref{A2}.

When the function $\widehat{Q}_{X,*}$ does not satisfy \eqref{MMR},
it is not a distribution on $\X$.
Due to Theorem \ref{T0}, the condition (i) does not hold.
That is, there exists an element $x \in \X$ such that 
$D( W_x\| Q_{Y,*}) >C(W)$.

Hence, under the condition (ii), the capacity $C(W)$ is given by $D( W_x\| Q_{Y,*})$.
To consider the case that the condition (ii) does not hold,
we prepare the following theorem.
For any function $f$ on $\X$,
we define ${\cal N}(f):= \{x \in \X| f(x) < 0\}$ and 
${\cal N}^c(f):= \{x \in \X| f(x) \ge 0\}$.
\begin{theorem}\Label{T-1}
Assume Condition (A). Then, we have
\begin{align}
C(W)=\max_{Q_X \in {{\cal P}_{{\cal N}^c(\widehat{Q}_{X,*})}} }
\sum_{x\in {\cal N}^c( \widehat{Q}_{X,*})}Q_X(x) D(W_x \| W \cdot Q_X  ) .\Label{ZP1}
\end{align}
\end{theorem}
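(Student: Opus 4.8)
\noindent\emph{Proof sketch.}
I would prove the two inequalities between the sides of \eqref{ZP1} separately. Write $\X':={\cal N}^c(\hat{Q}_{X,*})$ and let $C'(W)$ denote the right-hand side of \eqref{ZP1}; note that $C'(W)$ is just the capacity of the channel obtained by deleting the inputs outside $\X'$. The bound $C'(W)\le C(W)$ is immediate, since any $Q_X\in\P_{\X'}$ can be viewed as an element of $\P_\X$ with no mass outside $\X'$, and for such a $Q_X$ one has $\sum_{x\in\X'}Q_X(x)D(W_x\|W\cdot Q_X)=\sum_{x\in\X}Q_X(x)D(W_x\|W\cdot Q_X)$; thus $C'(W)$ is a maximum of the objective of \eqref{MOA} over a subset of $\P_\X$. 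Hence the entire content is the reverse bound $C'(W)\ge C(W)$, i.e.\ the existence of an optimal input distribution supported on $\X'$.

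For the reverse bound I would first linearize using the defining property of $Q_{Y,*}$. By condition (B) the number $\kappa:=D(W_x\|Q_{Y,*})$ does not depend on $x\in\X$. Substituting $Q_Y=Q_{Y,*}$ into the identity
\begin{align}
\sum_{x}Q_X(x)D(W_x\|Q_Y)=\sum_{x}Q_X(x)D(W_x\|W\cdot Q_X)+D(W\cdot Q_X\|Q_Y),
\end{align}
which holds for all $Q_X,Q_Y$, gives $\sum_{x}Q_X(x)D(W_x\|W\cdot Q_X)=\kappa-D(W\cdot Q_X\|Q_{Y,*})$ for every $Q_X\in\P_\X$; since $D(W_x\|Q_{Y,*})=\kappa$ for $x\in\X'\subseteq\X$ too, the same identity holds for every $Q_X\in\P_{\X'}$. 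As $Q_X$ ranges over $\P_\X$ (resp.\ over $\P_{\X'}$) the output $W\cdot Q_X$ ranges over the polytope $\Delta:=\{\sum_{x\in\X}c(x)W_x:c\in\P_\X\}$ (resp.\ over its face $\Delta':=\{\sum_{x\in\X'}c(x)W_x:c\in\P_{\X'}\}$), and therefore
\begin{align}
C(W)=\kappa-\min_{Q\in\Delta}D(Q\|Q_{Y,*}),\qquad C'(W)=\kappa-\min_{Q\in\Delta'}D(Q\|Q_{Y,*}).
\end{align}
Since $\Delta'\subseteq\Delta$, the theorem is equivalent to the geometric statement that the minimizer $Q^\ast$ of $D(\cdot\|Q_{Y,*})$ over $\Delta$ (unique, by strict convexity of $D(\cdot\|Q_{Y,*})$) already lies in $\Delta'$; equivalently, the capacity-achieving output distribution is a mixture of only those $W_x$ with $\hat{Q}_{X,*}(x)\ge0$.

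This geometric statement is the core of the proof and the step I expect to be hardest. By Condition (A) the $W_x$ are affinely independent, so $\Delta$ is a simplex, $\Delta'$ is its face with vertex set $\{W_x:x\in\X'\}$, and the numbers $\hat{Q}_{X,*}(x)$ are the barycentric coordinates of $Q_{Y,*}$ (they sum to $1$ by summing \eqref{XLR} over $y$). Choosing an affine functional $\ell$ with $\ell(W_x)=b$ for $x\in\X'$ and $\ell(W_x)<b$ for $x\in\X\setminus\X'$ yields a supporting hyperplane $\{\ell=b\}$ of $\Delta$ with $\Delta\cap\{\ell=b\}=\Delta'$ and $\Delta\subseteq\{\ell\le b\}$, and moreover
\begin{align}
\ell(Q_{Y,*})=b+\sum_{x\in\X\setminus\X'}\hat{Q}_{X,*}(x)\bigl(\ell(W_x)-b\bigr)>b,
\end{align}
since each summand is a product of two negative numbers and $\sum_x\hat{Q}_{X,*}(x)=1$: thus $Q_{Y,*}$ lies strictly on the far side of $\{\ell=b\}$ from $\Delta$. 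One then wants to combine this with the elementary monotonicity fact that $t\mapsto D\bigl((1-t)Q_{Y,*}+tQ\,\big\|\,Q_{Y,*}\bigr)$ is strictly increasing on $(0,1]$, to argue that for any $Q\in\Delta\setminus\Delta'$ the segment $[Q_{Y,*},Q]$ must leave $\Delta$ at a point strictly interior to it, which already has smaller $D(\cdot\|Q_{Y,*})$ than $Q$, so $Q$ is not the minimizer. The obstacle is that when $Q\in\partial\Delta$ this exit point need not be interior to the segment, so the bare separation argument has a gap (its Euclidean analogue is genuinely false for a general exterior point of a simplex). Closing it should require using not merely separation but the full equidistance $D(W_x\|Q_{Y,*})\equiv\kappa$, together with the fact that each of $W_x$, $Q^\ast$, $Q_{Y,*}$ is a probability vector. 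Concretely, after reducing to $Q_{Y,*}>0$ everywhere — if $Q_{Y,*}(y)=0$ then finiteness of $\kappa$ forces $W_x(y)=0$ for all $x$, and that output letter may be dropped — I would pass to the minimal face $F$ of $\Delta$ containing $Q^\ast$: since $Q^\ast$ lies in the relative interior of $F$, it is the $D(\cdot\|Q_{Y,*})$-projection of $Q_{Y,*}$ onto the affine hull of $F$, so the Pythagorean identity $D(Q\|Q_{Y,*})=D(Q\|Q^\ast)+D(Q^\ast\|Q_{Y,*})$ holds for every $Q$ in that affine hull, in particular for each vertex $W_x$ of $F$. Feeding these equalities, the equidistance $D(W_x\|Q_{Y,*})=\kappa$, the relation $Q_{Y,*}=\sum_x\hat{Q}_{X,*}(x)W_x$, and the first-order optimality of $Q^\ast$ over $\Delta$ into one another is the computation that must be carried through to force the barycentric coordinates of $Q^\ast$ to vanish off $\X'$, i.e.\ $Q^\ast\in\Delta'$.
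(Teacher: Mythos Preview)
Your reduction $C(W)=\kappa-\min_{Q\in\Delta}D(Q\|Q_{Y,*})$ and $C'(W)=\kappa-\min_{Q\in\Delta'}D(Q\|Q_{Y,*})$ is correct and is a genuinely different route from the paper's; it turns the theorem into the clean geometric statement that the unique minimizer $Q^\ast$ of $D(\cdot\|Q_{Y,*})$ over the simplex $\Delta$ already lies in the face $\Delta'$ spanned by $\{W_x:\hat Q_{X,*}(x)\ge0\}$. You are also right that the bare supporting-hyperplane argument does not settle this.

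The genuine gap is the final ``computation that must be carried through'': the ingredients you list do not force the conclusion. Combining the Pythagorean identity on the minimal face $F\ni Q^\ast$, the equidistance $D(W_x\|Q_{Y,*})=\kappa$, the expansion $Q_{Y,*}=\sum_x\hat Q_{X,*}(x)W_x$, and first-order optimality, one obtains
\[
\sum_{x\in\X}\hat Q_{X,*}(x)\bigl(D(W_x\|Q^\ast)-C(W)\bigr)=D(Q^\ast\|Q_{Y,*})+D(Q_{Y,*}\|Q^\ast)>0,
\]
together with $D(W_x\|Q^\ast)\le C(W)$ for every $x$. Since each bracket is $\le0$, this forces \emph{some} $x_0\in{\cal N}(\hat Q_{X,*})$ to satisfy $D(W_{x_0}\|Q^\ast)<C(W)$ and hence $x_0\notin\supp(Q_X^\ast)$; but it does not force \emph{every} such $x_0$ out of the support. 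The Pythagorean equalities on $F$ say only that $D(W_x\|Q^\ast)=C(W)$ for the vertices of $F$ and carry no information about the sign of $\hat Q_{X,*}$ at those vertices; your functional $\ell$ separates $\Delta'$, not $F$, so it does not interact with those equalities. As written, the sketch stops exactly where the difficulty is.

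The paper does not try to handle the whole face $\Delta'$ at once. It fixes a single $x_0$ with $\hat Q_{X,*}(x_0)<0$ and shows $Q_{X,*}(x_0)=0$ directly. The key construction is the $m$-projection $Q_{Y,0}:=\Gamma^{(m)}_{{\cal M}_{x_0}}(Q_{Y,*})$ onto the affine hull of $\{W_x:x\neq x_0\}$, together with the one-parameter exponential family through $Q_{Y,0}$ and $Q_{Y,*}$. The sign $\hat Q_{X,*}(x_0)<0$ fixes the ordering of $Q_{Y,*}$, $Q_{Y,0}$ and the $e$-projection of $W_{x_0}$ along that family, and a Fisher-information monotonicity inequality (Lemma~\ref{L2}) converts that ordering into $D(W_{x_0}\|Q_{Y,0})\le D(W_{x'}\|Q_{Y,0})$ for all $x'\neq x_0$; an iteration then handles the case where the reduced coefficients are again not all nonnegative. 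To complete your approach you would need an analogue of this per-$x_0$ comparison, not only the aggregate identity above.
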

Theorem \ref{T-1} is shown in Appendix \ref{A3}.
Therefore, the capacity $C(W)$ is obtained only with the input set ${\cal N}^c(\widehat{Q}_{X,*})$.
That is, the function $\widehat{Q}_{X,*}$ gives an important information for computing $C(W)$.



We choose $n_2-1$ linearly independent functions $f_1, \ldots, f_{n_2-1}$ on $\Y$ such that they are not constant function and
\begin{align}
\sum_{y \in \Y}W_{n_2} (y)f_j(y)=0 \Label{CO}
\end{align}
for $j=1, \ldots, n_2-1$.
We define the matrix $(h_{i,j})$
\begin{align}
h_{i,j}:= \sum_{y \in \Y}W_i(y) f_j(y).\Label{XPA}
\end{align}
\if0
Since functions $f_1, \ldots, f_{n_2-1}$ are linearly independent
and $W_1, \ldots, W_{n_2}$ are linearly independent,
the condition \eqref{CO} guarantees that 
the matrix $(h_{i,j})_{1\le i,j\le n_2-1}$ is invertible.
\fi
Given an $n_2-1$-dimensional parameter $\theta=(\theta^1, \ldots, \theta^{n_2-1})$,
we define the distribution $P_{\theta,Y}$ as
\begin{align}
P_{\theta,Y}(y)=e^{\sum_{j=1}^{n_2-1} f_j(y)\theta^j-\phi(\theta)  },\Label{nat}
\end{align}
where 
\begin{align}
\phi(\theta):= \log 
\bigg(\sum_{y\in\Y}e^{\sum_{j=1}^{n_2-1} f_j(y)\theta^j} \bigg) \Label{XZP}.
\end{align}
The parameterization \eqref{nat} is called the natural parameter \cite{Amari-Nagaoka}.

We have the following theorem.
\begin{theorem}\Label{T1}
Assume that the parameters $\theta^1, \ldots, \theta^{n_2-1}$ satisfy
the condition
\begin{align}
\sum_{j=1}^{n_2-1}h_{i,j} \theta^j= -H(W_i)+H(W_{n_1}).\Label{MXP}
\end{align}
for $i=1, \ldots, n_1-1$.
Then, we have
\begin{align}
D(W_x\| P_{\theta,Y})=\phi(\theta)-H(W_{n_1}).
\end{align}
for $x\in \X$.
\end{theorem}

\begin{proof}
The condition \eqref{MXP} implies that
\begin{align}
\sum_{y \in \Y}W_i(y)  \sum_{j=1}^{n_2-1} f_j(y)\theta^j
=&\sum_{j=1}^{n_2-1}h_{i,j} \theta^j\nonumber \\
=&-H(W_i)+H(W_{n_2}).
\end{align}

For $x (\neq n_2) \in \X$, we have
\begin{align}
& D(W_x\| P_{\theta,Y})=
\sum_{y \in \Y}W_x(y) \big( \log W_x(y)-\log P_{\theta,Y}(y) \big) \nonumber \\
=&-H(W_x)
-\sum_{y \in \Y}W_x(y) \bigg(
\sum_{j=1}^{n_2-1} f_j(y)\theta^j-\phi(\theta) \bigg) \nonumber \\
=&-H(W_x)- \Big(-H(W_x) +H(W_{n_2}) -\phi(\theta) \Big)\nonumber \\
=&\phi(\theta)-H(W_{n_2}).
\end{align}
Also, we have
\begin{align}
& D(W_n\| P_{\theta,Y})=
\sum_{y \in \Y}W_{n_2}(y) (\log W_{n_2}(y)-\log P_{\theta,Y}(y)) \nonumber \\
=&-H(W_{n_2})
-\sum_{y \in \Y}W_x(y) \bigg(
\sum_{j=1}^{n_2-1} f_j(y)\theta^j-\phi(\theta) \bigg) \nonumber \\
=&-H(W_{n_2})- \big(-\phi(\theta) \big)
=\phi(\theta)-H(W_{n_2}).
\end{align}
\end{proof}

We define the set ${\cal E}_0$ as
\begin{align}
{\cal E}_0:= \{P_{\theta,Y} | \hbox{ The condition \eqref{MXP} holds.} \}
\end{align}
\begin{lemma}\Label{LX1}
The set ${\cal M}_0 \cap {\cal E}_0$ is composed of one element $P_{\theta_*,Y}$.
\end{lemma}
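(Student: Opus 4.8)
The plan is to prove that ${\cal M}_0\cap{\cal E}_0$ contains at most the one point $Q_{Y,*}$ singled out just before the lemma, and that $Q_{Y,*}$ does lie in both sets; one then defines $P_{\theta_*,Y}:=Q_{Y,*}$. For the first claim I would record the converse of Theorem \ref{T1}: expanding $D(W_x\|P_{\theta,Y})$ with \eqref{nat} gives, for any $\theta$,
\begin{align}
D(W_x\|P_{\theta,Y})=-H(W_x)-\sum_{j=1}^{n_2-1}h_{x,j}\theta^j+\phi(\theta),
\end{align}
so this quantity is independent of $x\in\X$ precisely when $\sum_j h_{x,j}\theta^j+H(W_x)$ is, and evaluating at $x=n_2$, where $h_{n_2,j}=0$ by \eqref{CO}, fixes the common value as $\phi(\theta)-H(W_{n_2})$ and turns the constancy into exactly \eqref{MXP}. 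Hence ${\cal E}_0$ is precisely the set of $P_{\theta,Y}$ obeying Condition (B). Any $Q_Y\in{\cal M}_0\cap{\cal E}_0$ therefore lies in ${\cal M}_0$ and obeys Condition (B), so the uniqueness statement recorded just after the definition of ${\cal M}_0$ (which uses Condition (A)) forces $Q_Y=Q_{Y,*}$.

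For the second claim I would show $Q_{Y,*}=P_{\theta_*,Y}$ for a suitable parameter. The functions $f_1,\dots,f_{n_2-1}$ together with the constant function $\mathbf 1$ form a basis of the functions on $\Y$: a relation $\mathbf 1=\sum_j a_j f_j$ would give $1=\sum_y W_{n_2}(y)\mathbf 1(y)=\sum_j a_j\sum_y W_{n_2}(y)f_j(y)=0$ by \eqref{CO}, a contradiction. Consequently any everywhere-positive distribution on $\Y$ equals $P_{\theta,Y}$ for a unique $\theta$ (expand its logarithm in this basis; normalization absorbs the constant term into $-\phi(\theta)$). Granting that $Q_{Y,*}$ has full support, we thus get $Q_{Y,*}=P_{\theta_*,Y}$; since $Q_{Y,*}$ obeys Condition (B), the equivalence from the first claim shows $\theta_*$ satisfies \eqref{MXP}, that is $P_{\theta_*,Y}\in{\cal E}_0$, while $Q_{Y,*}\in{\cal M}_0$ by construction. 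Combining the two claims yields ${\cal M}_0\cap{\cal E}_0=\{Q_{Y,*}\}=\{P_{\theta_*,Y}\}$.

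The main obstacle is the full support of $Q_{Y,*}$, since every element of ${\cal E}_0$ is automatically everywhere positive while a general element of ${\cal M}_0$ need not be. This is harmless once Condition (B) is understood to require a \emph{finite} common value of $D(W_x\|Q_Y)$ — which is in any case needed for the preceding uniqueness claim, as otherwise every distribution whose support misses a letter hit by some $W_x$ would satisfy (B) with common value $+\infty$ — and once one discards (without loss of generality) any output letter $y$ with $W_x(y)=0$ for all $x$: then $D(W_x\|Q_{Y,*})<\infty$ for every $x$ forces $\supp(Q_{Y,*})\supseteq\bigcup_x\supp(W_x)=\Y$. Alternatively one may solve \eqref{MXP} first: its coefficient matrix has a zero row at $x=n_2$ by \eqref{CO}, and on the remaining rows the pairing $(h_{i,j})$ is nondegenerate because under Condition (A) its left kernel is $\mathbf 1^\perp\cap\mathrm{span}(W_{n_2})=\{0\}$, which pins down $\theta_*$ uniquely and, via the same support argument and Theorem \ref{T1}, identifies $P_{\theta_*,Y}$ with $Q_{Y,*}$.
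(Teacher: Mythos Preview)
Your argument is correct and takes a genuinely different route from the paper's. The paper proves Lemma~\ref{LX1} via information geometry: after a change of basis in the $f_j$ so that $h_{i,j}=0$ for $j\ge n_1$, it identifies ${\cal E}_0$ as an exponential family and ${\cal M}_0$ as the mixture family generated by the \emph{same} functions $f_{n_1},\dots,f_{n_2-1}$, and then invokes the Pythagorean theorem (Theorem~\ref{TH6}) to conclude that their intersection is a single point. Your proof is more elementary: you observe that \eqref{MXP} is exactly the converse of Theorem~\ref{T1}, so ${\cal E}_0$ coincides with the set of exponential-family outputs obeying Condition~(B), and then you lean on the uniqueness of $Q_{Y,*}$ already asserted in the text to get $|{\cal M}_0\cap{\cal E}_0|\le 1$; existence comes from writing $Q_{Y,*}$ in the exponential form via the basis $\{f_1,\dots,f_{n_2-1},\mathbf 1\}$. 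What the paper's approach buys is self-containment---its proof does not rely on the sentence ``Due to Condition~(A), only one distribution $Q_Y\in{\cal M}_0$ satisfies (B),'' which is asserted but not separately proved---and it exposes the dual exponential/mixture structure used later in Theorem~\ref{TL}. What your approach buys is economy: no Pythagorean theorem, no change of basis, and a transparent reason why \eqref{MXP} characterizes ${\cal E}_0$. Your handling of the full-support issue (finite common value in (B), together with discarding unreachable output letters) is the right patch; the paper sidesteps this entirely because Theorem~\ref{TH6} delivers the intersection point directly inside the everywhere-positive exponential family.
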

Lemma \ref{LX1} is shown in Appendix \ref{A4}.
Therefore, $P_{\theta_*,Y}$ equals $Q_{Y,*}$.

Now, as a stronger assumption than Condition (A), 
we assume the following condition (Condition (C)).
\begin{description}
\item[(C)]
$n_1=n_2$ and $W_1, \ldots, W_{n_1}$ are linearly independent.
\end{description}
Since ${\cal M}_{0}=\P_{\X}$, 
due to Lemma \ref{LL2},
only one set of parameters $\theta^1, \ldots, \theta^{n_2-1}$ satisfies
the condition \eqref{MXP}.
Due to Theorem \ref{T1}, solving the equation \eqref{MXP}, we find $Q_{Y,*}$ as $P_{\theta,Y}$.
To construct our algorithm, we add the $n_2$-th function $f_{n_2}$ on $\Y$
and define $h_{i,j}$ by \eqref{XPA} for $i,j=1, \ldots, n_2$.
We rewrite the equation \eqref{XLR} as
\begin{align}
\sum_{x\in \X} \widehat{Q}_{X,*}(x) h_{x,j}= &
\sum_{x \in \X} \widehat{Q}_{X,*}(x) \sum_{y \in Y} W_x(y) f_{j}(y) \nonumber \\
=&\sum_{y \in \Y} P_{\theta ,Y}(y)f_{j}(y).
\Label{MMR6}
\end{align}
We obtain the function $\widehat{Q}_{X,*} $ on $\X$ as the solution of \eqref{MMR6},
and $W\cdot \widehat{Q}_{X,*}=Q_{Y,*}$ satisfies the condition (B).
When the function $\widehat{Q}_{X,*} $ satisfies
the condition \eqref{MMR},
the value $D(W_x\| P_{\theta ,Y}) $ is the capacity of the channel $W$ due to Theorem \ref{T1}.
Therefore, we have Algorithm \ref{protocol1} to compute $C(W)$ under Condition (C).

In fact, $(W_{i}(j))_{i,j}$ and $(f_{j}(i))_{i,j}$ form $n_2 \times n_2 $ matrices.
When $(f_{j}(i))_{i,j}$ is the inverse matrix of $(W_{i}(j))_{i,j}$, 
$(h_{i,j})_{i,j}$ is the identity matrix.
Due to Theorem \ref{T0},
Theorem \ref{T1} does not necessarily work for calculating $C(W)$.
Hence, based on Theorems \ref{T0} and \ref{T1}, 
we propose Algorithm \ref{protocol1} to check the condition in Theorem \ref{T0},
and compute $C(W)$ under this condition.

In Algorithm \ref{protocol1}, Step 1 has calculation complexity $O(n_2^3)$.
Steps 2 and 3 have calculation complexity $O(n_2^2)$ because $h_{i,j}$ is an upper triangle matrix.
Step 5 has calculation complexity $O(n_2^2)$.
Hence, the total calculation complexity is $O(n_2^3)$.

\begin{algorithm}
\caption{Exact algorithm for classical channel capacity}
\Label{protocol1}
\begin{algorithmic}
\STATE {Step 1: Choose $f_1, \ldots, f_{n_2}$ such that 
$(f_{j}(i))_{i,j}$ is the inverse matrix of $(W_{i}(j))_{i,j}$.
Hence, $h_{i,j}=\delta_{i,j}$.
} 
\STATE {Step 2: Set the parameter $\theta^i= -H(W_i)+H(W_{n_2})$ for $i=1, \ldots, n_2-1$, 
which is the solution of \eqref{MXP}.}
\STATE {Step 3: Calculate $\phi(\theta)$ by using \eqref{XZP}.}
\STATE {Step 4: Calculate $\widehat{Q}_{X,*}(x):= \sum_{y\in \Y} P_{\theta ,Y}(y)f_x(y)$,
where $P_{\theta ,Y}(y)$ is calculated by \eqref{nat}.
This step follows from \eqref{MMR6}.}
\STATE {Step 5: If the condition \eqref{MMR} holds,
we consider that the condition in Theorem \ref{T0} holds and
output $\phi(\theta)-H(W_n)$ as the capacity.
Otherwise, we consider that the condition in Theorem \ref{T0} does not hold
and output ``the capacity cannot be computed.''}
\end{algorithmic}
\end{algorithm}

Next, instead of Condition (C), we consider the following condition. 
\begin{description}
\item[(C')]
The relation $n_1 \ge n_2$ holds. 
Any $n_2$ elements among $W_1, \ldots, W_{n_1}$ are linearly independent.
\end{description}
Under this condition, 
we can apply Algorithm \ref{protocol1} to any
$n_2$ elements $x_1, \ldots, x_{n_2}$ in $\X$.
If the capacity is calculated under this choice, 
it is denoted by $C(W;x_1, \ldots, x_{n_2})$.
When the capacity is calculated under all choices of $x_1, \ldots, x_{n_2}$, 
the maximum of $C(W;x_1, \ldots, x_{n_2})$ is the capacity of the channel.

In this case, we need to try ${n_1 \choose n_2}$ combinations, which requires too large 
calculation amount.
However, it is possible to avoid such repetition as follows.
First, we apply the conventional iterative algorithm by \cite{Arimoto,Blahut} or 
the improved iterative algorithm by \cite{SSML}.
Then, we obtain an approximately optimal input distribution.
If the distribution has the majority of the probability in $n_2$ elements of $\X$,
we can consider the support of the optimal input distribution  
is composed of these $n_2$ elements of $\X$.
Hence, we apply Algorithm \ref{protocol1} to the case when $\X $ is the above $n_2$ elements.
That is, it is sufficient to check whether Algorithm \ref{protocol1} outputs the 
capacity only in this case.
When we employ this method, we do not need ${n_1 \choose n_2}$ repetitions.
That is, the above hybrid method works for analytical calculation.

However, 
if $C(W;x_1, \ldots, x_{n_2})$ depends on the choice of $n_2$ elements $x_1, \ldots, x_{n_2}$,
and the minimum difference
\begin{align*}
\min_
{\substack{(x_1, \ldots, x_{n_2}) \\ \neq (x_1', \ldots, x_{n_2}')}}
\big|C(W;x_1, \ldots, x_{n_2})-C(W;x_1', \ldots, x_{n_2}')\big|
\end{align*}
is very small, 
this idea does not work.
In this case, it is expected that
the approximately optimal input distribution
the majority of the probability in more than $n_2$ elements of $\X$.
Hence, the above method does not work.

Also, even under Condition (C'),
there is the case that the support of the optimal input distribution is composed of a smaller element than $n_2$.
In this case, even when we apply Algorithm \ref{protocol1} 
for ${n_1 \choose n_2}$ combinations, 
we cannot obtain the capacity.

When only Condition (A) holds, $Q_{Y,*}$ can be characterized as follows.
\begin{theorem}\Label{TL}
Assume that $h_{i,j}=0$ for $j=n_1+, \ldots, n_2-1$ and the parameters
$\theta^1,\ldots, \theta^{n_1-1}$ satisfies the condition \eqref{XPA}.
When the parameters
$\theta^{n_1},\ldots, \theta^{n_2-1}$ are given as
\begin{align}
&(\theta^{n_1},\ldots, \theta^{n_2-1})\nonumber \\
=&\argmin_{\eta^{n_1},\ldots, \eta^{n_2-1}}\phi(\theta^1,\ldots, \theta^{n_1-1},
 \eta^{n_1},\ldots, \eta^{n_2-1}),\Label{MIH}
\end{align}
we have $P_{\theta,Y}=Q_{Y,*}$. 
\end{theorem}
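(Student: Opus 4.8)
The plan is to recognize \eqref{MIH} as the information-geometric projection characterization of $Q_{Y,*}$. Recall that $Q_{Y,*}$ is the unique element of ${\cal M}_0 \cap {\cal E}_0$ by Lemma \ref{LX1}, where ${\cal E}_0$ is the exponential family cut out by condition \eqref{MXP}. Under Condition (A) alone we have $n_1 \le n_2$, so the matrix $(h_{i,j})_{i\le n_1-1,\, j\le n_2-1}$ is a non-square system; the hypothesis $h_{i,j}=0$ for $j=n_1,\ldots,n_2-1$ says the chosen functions $f_{n_1},\ldots,f_{n_2-1}$ are orthogonal to every $W_i$, i.e. these extra coordinate directions are precisely the ``free'' directions inside the exponential family that are not pinned down by \eqref{MXP}. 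Fixing $\theta^1,\ldots,\theta^{n_1-1}$ to solve \eqref{MXP} places $P_{\theta,Y}$ on a sub-exponential-family ${\cal E}_0$ parametrized by the remaining coordinates $\theta^{n_1},\ldots,\theta^{n_2-1}$; the theorem asserts that minimizing $\phi$ over those coordinates lands exactly on ${\cal M}_0$.

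The key steps, in order, would be: (1) Compute $\partial \phi/\partial \theta^j = \sum_{y} P_{\theta,Y}(y) f_j(y)$, the standard expectation-parameter identity for the natural parametrization \eqref{nat}. (2) Write the stationarity condition for \eqref{MIH}: at the minimizer, $\sum_y P_{\theta,Y}(y) f_j(y) = 0$ for $j=n_1,\ldots,n_2-1$. Strict convexity of $\phi$ (its Hessian is the covariance matrix of $(f_{n_1},\ldots,f_{n_2-1})$ under $P_{\theta,Y}$, positive definite by linear independence of the $f_j$) guarantees the minimizer exists and is unique, so these stationarity equations characterize it. (3) Combine with the constraint \eqref{MXP} already imposed via $\theta^1,\ldots,\theta^{n_1-1}$ together with $h_{i,j}=0$ for the large $j$: this shows $\sum_y P_{\theta,Y}(y) f_j(y) = h_{x,j}$-type expressions match those of a distribution lying in the mixture family generated by $W_1,\ldots,W_{n_1}$ with the affine normalization defining ${\cal M}_0$. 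Concretely, one shows $P_{\theta,Y}$ satisfies the same linear constraints (in the $f_j$ coordinates, $j=1,\ldots,n_2-1$) that pin down membership in ${\cal M}_0$, using that $\{f_1,\ldots,f_{n_2-1}\}$ together with the constant function spans all functions on $\Y$. (4) Conclude $P_{\theta,Y} \in {\cal M}_0 \cap {\cal E}_0$, hence by Lemma \ref{LX1} it equals $Q_{Y,*}$.

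The main obstacle I expect is step (3): carefully verifying that the $n_2-1$ linear conditions obtained — namely $\sum_j h_{i,j}\theta^j = -H(W_i)+H(W_{n_2})$ for $i=1,\ldots,n_1-1$ (with the $h_{i,j}=0$ simplification) plus the $n_2-n_1$ stationarity equations $\sum_y P_{\theta,Y}(y) f_j(y)=0$ — together force $P_{\theta,Y}$ into the mixture family ${\cal M}_0$. This amounts to checking that the expectation parameters of $P_{\theta,Y}$ under the full coordinate system $(f_1,\ldots,f_{n_2-1})$ coincide with those of some $\sum_{x} c(x) W_x$ with $\sum_{x \notin \X_0} c(x) = 1$; the affine normalization in the definition of ${\cal M}_0$ and the role of $\X_0$ must be tracked precisely, and one must invoke Condition (A) to get a consistent (indeed unique) such mixture representation. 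Once the coordinate bookkeeping is done, the convexity argument in step (2) and the uniqueness from Lemma \ref{LX1} close the proof cleanly.
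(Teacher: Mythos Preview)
Your proposal is correct and follows essentially the same route as the paper: compute $\partial\phi/\partial\theta^j=\sum_y P_{\theta,Y}(y)f_j(y)$, use convexity of $\phi$ to reduce \eqref{MIH} to the stationarity equations $\sum_y P_{\theta,Y}(y)f_j(y)=0$ for $j=n_1,\ldots,n_2-1$, recognize these as precisely the conditions defining ${\cal M}_0$, and then invoke Lemma~\ref{LX1}. Your worry about step~(3) is in fact not an obstacle: under the hypothesis $h_{i,j}=0$ for $j\ge n_1$ (i.e.\ $\sum_y W_i(y)f_j(y)=0$ for all $i$), the mixture family ${\cal M}_0$ is exactly the set $\{Q_Y:\sum_y f_j(y)Q_Y(y)=0,\ j=n_1,\ldots,n_2-1\}$, a characterization the paper records as \eqref{XPY} inside the proof of Lemma~\ref{LX1} and then quotes directly here---so the ``coordinate bookkeeping'' you anticipate collapses to a one-line identification.
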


\begin{proof}
Since the objective function in \eqref{MIH}, 
the parameters $\theta^{n_1},\ldots, \theta^{n_2-1}$ achieves the minimum \eqref{MIH} if and only if
\begin{align}
\frac{\partial \phi(\theta^1,\ldots, \theta^{n_2-1})}{\partial \theta^{j}} =0 \hbox{ for } j= n_1, \ldots, n_2-1.
\Label{CPX}
\end{align}
This is because 
the concavity of $\phi$ guarantees that  there are no local minima. 
The above condition is equivalent to 
\begin{align}
\sum_{y \in \Y}f_j(y) P_{\theta,Y}(y)=0\hbox{ for } j= n_1, \ldots, n_2-1.
\end{align}
Due to \eqref{XPY}, when 
$\theta^{n_1},\ldots, \theta^{n_2-1}$ are given by \eqref{MIH},
$P_{\theta,Y}$ belongs to ${\cal M}_0$.
Due to the uniqueness by Lemma \ref{LX1},
we obtain the desired statement.
\end{proof}

Theorem \ref{TL} guarantees that 
$Q_{Y,*}$ is given as the solution of the minimization \eqref{MIH}, which is a convex minimization.
While the analytical solution of \eqref{MIH} is difficult in general, 
it is possible in the following case.
We impose the following condition for the functions
$f_{n_1}, \ldots, f_{n_2-1}$:
(I) $f_{j}(y)$ takes non-zero value only with two elements $y_j,y_j' \in \Y$ for $j=n_1+, \ldots, n_2-1$.
(II) The sets $\{y_j,y_j'\}$ for $j=n_1+, \ldots, n_2-1$ are disjoint with each other.

In this case, the relation \eqref{MIH}, i.e., \eqref{CPX}, can be simplified as
\begin{align}
0=&f_j(y_j) e^{f_j(y_j)\theta^j +\sum_{i=1}^{n_1-1}f_i(y_j) \theta^i}
\nonumber \\
&+
f_j(y_j')e^{f_j(y_j')\theta^j +\sum_{i=1}^{n_1-1}f_i(y_j') \theta^i}
\Label{XAB}
\end{align}
for $j=n_1+, \ldots, n_2-1$.
The equation \eqref{XAB} is solved as
\begin{align}
\theta^j=&\frac{1}{f_j(y_j)-f_j(y_j')}
\bigg(\sum_{i=1}^{n_1-1}(f_i(y_j') -f_i(y_j)) \theta^i \nonumber \\
&+\log \frac{-f_j(y_j')}{f_j(y_j)}\bigg)
\end{align}
for $j=n_1+, \ldots, n_2-1$.
Therefore, we can analytically calculate $P_{\theta,Y}=Q_{Y,*}$
under the conditions (I) and (II).


\section{Example}\Label{S3}
\subsection{Output system with two elements}\Label{S3-1}
First, we consider the case with $Y=\{1,2\}$.
When $\X$ and the channel $W$ satisfies 
Condition (C') in this case, 
the method described after Condition (C') works well as follows.
For any two elements $x_1\neq x_2 \in \X$,
the channel only with two inputs $x_1,x_2$ always satisfies the condition \eqref{MMR}
because $Q_{Y,*}$ is located between $W_{x_1}$ and $W_{x_2}$.
Hence, the condition in Theorem \ref{T0} holds. 
Therefore, it is sufficient to derive a general formula for the capacity when two elements in $\X$
are fixed.

Therefore, in the following, we consider the case with $\X=\{1,2\}$ and $Y=\{1,2\}$.
We define the distributions $W_x$ for $x\in \X$ by the following vector form:
\begin{align}
W_1:= &
\left(
\begin{array}{c}
1-p \\
p\end{array}
\right),\quad
W_2:= 
\left(
\begin{array}{c}
1-q \\
q
\end{array}
\right) .
\end{align}
For simplicity, we assume that $q > p$.
We define the $2 \times 2$ matrix $V$ as 
$V:=(W_1, W_1)$.
The inverse matrix is
\begin{align}
V^{-1}=&
\frac{1}{q-p}\left(
\begin{array}{cc}
q & q-1 \\
-p & 1-p
\end{array}
\right).
\end{align}
In this case,  the parameter $\theta$ is one-dimensional and is solved to 
$h(q)-h(p)$, where $h(p)$ is the binary entropy.
Then, $\phi(\theta)$ is calculated as
\begin{align}
\phi(\theta)
=& \log \Big( e^{\frac{q(h(q)-h(p))}{q-p}}+ e^{\frac{-p(h(q)-h(p))}{q-p}}\Big)  \nonumber\\
=& \log \Big( e^{\frac{q(h(q)-h(p))}{q-p}}(1+ e^{\frac{-(q+p)(h(q)-h(p))}{q-p}})\Big)  \nonumber\\
=& \frac{q(h(q)-h(p))}{q-p} +\log \Big(1+ e^{\frac{-(q+p)(h(q)-h(p))}{q-p}}\Big) .
\end{align}
The capacity is calculated as
\begin{align}
C(W)=&\phi(\theta)-h(p)  \nonumber\\
=& \frac{p h(q)-q h(p)}{q-p} +\log \Big(1+ e^{\frac{-(q+p)(h(q)-h(p))}{q-p}}\Big) ,
\end{align}
which is a general capacity formula with 
$\X=\{1,2\}$ and $Y=\{1,2\}$.
Then, 
\begin{align}
P_{\theta,Y}=
\left(
\begin{array}{c}
e^{\frac{q(h(q)-h(p))}{q-p} -\phi(\theta)}\\
e^{\frac{-p(h(q)-h(p))}{q-p}-\phi(\theta)}
\end{array}
\right).
\end{align}
Hence, the optimal input distribution is
\begin{align}
\widehat{Q}_{X,*}=
\left(
\begin{array}{c}
\frac{1}{q-p} \Big(q -e^{\frac{-p(h(q)-h(p))}{q-p}-\phi(\theta) } \Big) \\
\frac{1}{q-p} \Big(-p +e^{\frac{-p(h(q)-h(p))}{q-p}-\phi(\theta) } \Big) 
\end{array}
\right).
\end{align}

\subsection{Output system with three elements}
\subsubsection{General problem description}
Next, we consider the case with $Y=\{1,2,3\}$.
In this case, Algorithm \ref{protocol1} does not necessarily work even under the condition (C).
Moreover, the method described after Condition (C') 
does not necessarily work even under the condition (C').
To see such a case, 
we consider the following example with
$\X=\{1,2,3,4\}$ and $\Y=\{1,2,3\}$
with $\epsilon \in [0,1/2]$.
We define the distributions $W_x$ for $x\in \X$ by the following vector form:
\begin{align}
W_1:= &
\left(
\begin{array}{c}
1-\epsilon \\
0 \\
\epsilon
\end{array}
\right),\quad
W_2:= 
\left(
\begin{array}{c}
0 \\
1 -\epsilon\\
\epsilon
\end{array}
\right) \\
W_3:= &
\left(
\begin{array}{c}
\frac{1}{2} \\
\frac{1}{2} \\
0
\end{array}
\right),\quad
W_4:= 
\left(
\begin{array}{c}
\frac{1}{2}-\epsilon \\
\frac{1}{2}-\epsilon \\
2\epsilon
\end{array}
\right) .
\end{align}
We define $3 \times 3$ matrix $V_j$ for $j\in \X$ as 
$V_1:=(W_2, W_3,W_4)$, $V_2:=(W_1, W_3,W_4)$,
$V_3:=(W_1, W_2,W_4)$, $V_4:=(W_1, W_2,W_3)$.
Their inverse matrices are
\begin{align}
V_1^{-1}=&
\left(
\begin{array}{ccc}
-\frac{1}{1-\epsilon} &\frac{1}{1-\epsilon}& 0 \\
\frac{3-2\epsilon}{2(1-\epsilon)} & \frac{1-2\epsilon}{2(1-\epsilon)} &  \frac{-1+2\epsilon}{2\epsilon}\\
\frac{1}{2(1- \epsilon)} & -\frac{1}{2(1- \epsilon)} &  \frac{1}{2\epsilon}\\
\end{array}
\right) \\
V_2^{-1}=&
\left(
\begin{array}{ccc}
\frac{1}{1-\epsilon} &-\frac{1}{1-\epsilon}& 0 \\
\frac{1-2\epsilon}{2(1-\epsilon)} & \frac{3-2\epsilon}{2(1-\epsilon)} &   \frac{-1+2\epsilon}{2\epsilon}\\
-\frac{1}{2(1- \epsilon)} & \frac{1}{2(1- \epsilon)} &  \frac{1}{2\epsilon}\end{array}
\right) \\
V_3^{-1}=&
\left(
\begin{array}{ccc}
\frac{3-2\epsilon}{2(1-\epsilon)} & \frac{1-2\epsilon}{2(1-\epsilon)} 
& -\frac{1-2\epsilon}{2\epsilon} \\
\frac{1-2\epsilon}{2(1-\epsilon)} &\frac{3-2\epsilon}{2(1-\epsilon)} 
& -\frac{1-2\epsilon}{2\epsilon} \\
-1 &-1 & \frac{1-\epsilon}{\epsilon} 
\end{array}
\right)  \\
V_4^{-1}=&
\left(
\begin{array}{ccc}
\frac{1}{2(1-\epsilon)} & -\frac{1}{2(1-\epsilon)} & \frac{1}{2\epsilon} \\
-\frac{1}{2(1-\epsilon)} & \frac{1}{2(1-\epsilon)} & \frac{1}{2\epsilon} \\
1 & 1 & \frac{-1+\epsilon}{\epsilon} 
\end{array}
\right) .
\end{align}
Also, we have 
\begin{align}
H(W_1)=&H(W_2)=h(\epsilon) \\
H(W_3)=&\log 2, \quad H(W_4)=h(2 \epsilon)+ (1-2\epsilon)\log 2.
\end{align}
When we apply Algorithm \ref{protocol1} to the three components in $V_j$,
we denote $\theta$, $\phi(\theta)$,
$P_{\theta,Y}$, $\widehat{Q}_{X,*}$, and $\phi(\theta)-H(W_n)$ by $\theta_j$, $\phi_j(\theta_j)$, $P_{j,Y}$ and $\widehat{Q}_{j,X}$, and $C_j$,
respectively.
In the following, we discuss our model dependently of the value of $j$.

\subsubsection{Case that $j=4$}
First, we consider the case that $j=4$, i.e., 
the channel is composed of three inputs $\{1,2,3\}$.
Then, we have
\begin{align}
\theta_4=
\left(
\begin{array}{c}
h_{4,\epsilon}\\
h_{4,\epsilon}
\end{array}
\right) 
\end{align}
with $h_{4,\epsilon}:=\log 2- h(\epsilon) $
and
\begin{align}
\phi_4(\theta_4)=&
\log \Big(2+ e^{ \frac{h_{4,\epsilon}}{\epsilon} }\Big) .
\end{align}
Thus,
\begin{align}
&C_4= \phi_4(\theta_4)-\log 2 
= \log \bigg(1+\frac{e^{\frac{h_{4,\epsilon}}{\epsilon}}}{2}\bigg)
\Label{NNO} 
\end{align}
Hence,
\begin{align}
P_{4,Y}=
\left(
\begin{array}{c}
e^{- \phi_4(\theta_4)} \\
e^{- \phi_4(\theta_4)} \\
e^{\frac{h_{4,\epsilon}}{\epsilon}- \phi_4(\theta_4)} 
\end{array}
\right) .
\end{align}
Therefore, 
\begin{align}
&\widehat{Q}_{4,X}
=V_4^{-1}P_{4,Y} \nonumber\\
=&
\left(
\begin{array}{c}
\frac{1}{2\epsilon} e^{\frac{h_{4,\epsilon}}{\epsilon} - \phi_4(\theta_4)}
\\
\frac{1}{2\epsilon} e^{\frac{h_{4,\epsilon}}{\epsilon} - \phi_4(\theta_4)}
\\
2e^{- \phi_4(\theta_4)}
-\frac{1-\epsilon }{\epsilon}e^{\frac{h_{4,\epsilon}}{\epsilon} - \phi_4(\theta_4)}
\end{array}
\right) .
\end{align}
While the first and second components of $\widehat{Q}_{4,X}$ are always positive value,
the third component has a possibility to have a negative value.
The non-negativity of the first component is equivalent to 
the following condition:
\begin{align}
1\ge g_1(\epsilon),\Label{AA}
\end{align}
where $g_1(\epsilon):=\frac{1-\epsilon}{2 \epsilon} e^{\frac{h_{4,\epsilon}}{\epsilon}}$.
In \eqref{AA}, 
the first inequality corresponds to the non-negativity of the third component and
the second inequality corresponds to the non-negativity of the first component.
Fig. \ref{g-graph} numerically plots the function $g_1(\epsilon)$.
It shows that $\widehat{Q}_{4,X}$ is a probability distribution when 
$0.3588 \le \epsilon $.
That is, $C_4$ is achievable for $0.3588 \le \epsilon $.
When 
$\epsilon < 0.3588$,
the third component of $\widehat{Q}_{4,X}$ is negative.
Hence, $C_4$ is not achievable.
Due to Theorem \ref{T-1},
the optimal input distribution in this case
has the support in $\{1,2\}$.
In this case, due to the symmetry, the uniform distribution on 
$\{1,2\}$ is optimal.
That is, the capacity with the input set $\{1,2,3\}$ is
\begin{align}
{C}_{*}:=-(1-\epsilon)\log \frac{1-\epsilon}{2}- \epsilon \log \epsilon
-  h(\epsilon)= (1-\epsilon)\log 2. \Label{Cs}
\end{align}
 
\begin{figure}[htbp]
\begin{center}
  \includegraphics[width=1\linewidth]{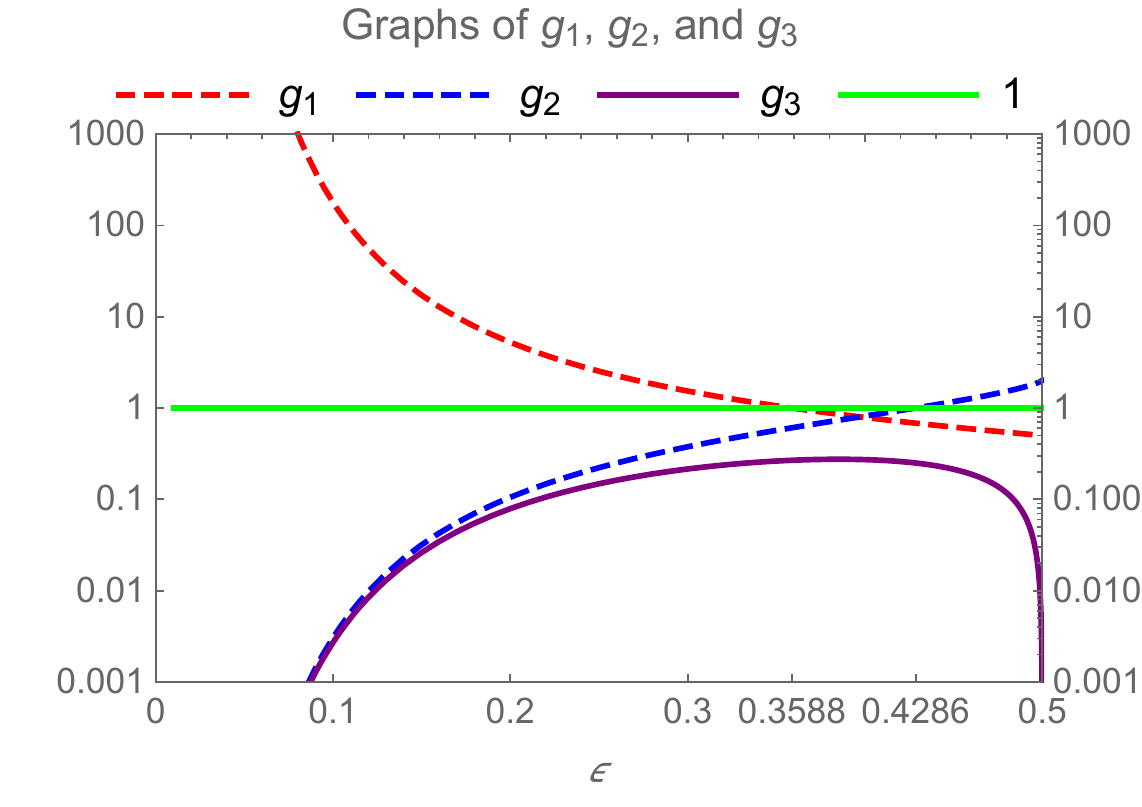}
  \end{center}
\caption{Graphs of functions $g_1,g_2,$ and $g_3$ with logarithmic scale.
Red dashed curve expresses $g_1$. 
Blue dashed curve expresses $g_2$. 
Purple solid curve expresses $g_3$. 
Green solid line expresses $1$. 
Red dashed curve $g_1$ and Blue dashed curve $g_2$
across Green solid line at $0.3588$ and $0.4286$, respectively.
}
\Label{g-graph}
\end{figure}

\subsubsection{Case that $j=3$}
We consider the case that $j=3$, i.e., 
the channel is composed of three inputs $\{1,2,4\}$.
Then, we have
\begin{align}
\theta_3=
\left(
\begin{array}{c}
h_{3,\epsilon}\\
h_{3,\epsilon}
\end{array}
\right) 
\end{align}
with $h_{3,\epsilon}:=h(2 \epsilon)+ (1-2\epsilon)\log 2- h(\epsilon) $
and
\begin{align}
\phi_3(\theta_3)=&
\log \Big(2e^{2 h_{3,\epsilon}}+ e^{ -\frac{(1-2\epsilon)h_{3,\epsilon}}{\epsilon} }\Big)  \nonumber\\
=&\log \Big(2+ e^{ -\frac{h_{3,\epsilon}}{\epsilon} }\Big) +
2 h_{3,\epsilon}.
\end{align}
Thus,
\begin{align}
&C_3= \phi_3(\theta_3)-h(2 \epsilon)- (1-2\epsilon)\log 2  \nonumber\\
=&\log \Big(2+ e^{ -\frac{ h_{3,\epsilon}}{\epsilon} }\Big) +
2h_{3,\epsilon}
-h(2 \epsilon)- (1-2\epsilon)\log 2  \nonumber\\
=&\log \Big(2+ e^{ -\frac{h_{3,\epsilon}}{\epsilon} }\Big) +
h(2 \epsilon)+ (1-2\epsilon)\log 2-2 h( \epsilon) .
\Label{NNO3} 
\end{align}
Hence,
\begin{align}
P_{3,Y}=
\frac{1}{2+ e^{ -\frac{h_{3,\epsilon}}{\epsilon} }}
\left(
\begin{array}{c}
1 \\
1\\
e^{-\frac{h_{3,\epsilon}}{\epsilon}}
\end{array}
\right) .
\end{align}
Therefore, 
\begin{align}
&\widehat{Q}_{3,X}
=V_3^{-1}P_{3,Y} \nonumber\\
=&
\frac{1}{2+ e^{ -\frac{h_{3,\epsilon}}{\epsilon} }}
\left(
\begin{array}{c}
1 -
\frac{1-2\epsilon}{2\epsilon} 
e^{-\frac{h_{3,\epsilon}}{\epsilon} } 
\\
1-
\frac{1-2\epsilon}{2\epsilon} 
e^{-\frac{h_{3,\epsilon}}{\epsilon}} 
\\
-2 +
\frac{1-\epsilon}{\epsilon}e^{-\frac{ h_{3,\epsilon}}{\epsilon}} 
\end{array}
\right) .
\end{align}
While the first and second components of $\widehat{Q}_{3,X}$ always have positive values,
the third component has a possibility to have a negative value.
The non-negativity of the first component is equivalent to 
the following condition:
\begin{align}
g_2(\epsilon) \ge 1\ge g_3(\epsilon),
\Label{AA3}
\end{align}
where
$g_2(\epsilon):=
\frac{1-\epsilon}{2 \epsilon} e^{-\frac{h_{3,\epsilon}}{\epsilon}}$
and
$g_3(\epsilon):=
\frac{1-2\epsilon}{2 \epsilon} e^{-\frac{h_{3,\epsilon}}{\epsilon}}$.
In \eqref{AA3}, 
the first inequality corresponds to the non-negativity of the third component and
the second inequality corresponds to the non-negativity of the first component.
However,  as numerically plotted in Fig. \ref{g-graph}, 
$g_3(\epsilon)\le 1$ for $\epsilon<\frac{1}{2}$ and 
$g_2(\epsilon)< 1$ for $\epsilon<0.4286$.
Hence, when $\epsilon \ge 0.4286$, $C_3$ is achievable, i.e., it gives the capacity under the case $j=3$.

When $\epsilon < 0.4286$,
the third component of $\widehat{Q}_{3,X}$ is negative.
In this case, due to Theorem \ref{T-1},
the optimal distribution has support $\{1,2\}$.
Hence, the capacity with the input set $\{1,2,4\}$ 
is $C_*$ defined in \eqref{Cs}.

\subsubsection{Case that $j=1$}
Next, we consider the case that $j=1$, i.e., 
the channel is composed of three inputs $\{2,3,4\}$.
Then, we have
\begin{align}
\theta_1=
\left(
\begin{array}{c}
h_{3,\epsilon}\\
h_{1,\epsilon}
\end{array}
\right) 
\end{align}
with $h_{1,\epsilon}:=h(2 \epsilon)-2\epsilon \log 2$
and
\begin{align}
&\phi_1(\theta_1)\nonumber\\
=&
\log \Big(
e^{-\frac{1}{1-\epsilon} h_{3,\epsilon}+(\frac{1}{2(1-\epsilon)}+1) h_{1,\epsilon}}  \nonumber\\
&+
e^{\frac{1}{1-\epsilon} h_{3,\epsilon}+(-\frac{1}{2(1-\epsilon)}+1) h_{1,\epsilon}}
+ e^{ (-\frac{1}{2\epsilon}+1) h_{1,\epsilon} }\Big)  \nonumber\\
=& \log \Big(
\frac{1}{1-\epsilon} 
\Big(\frac{1-2\epsilon}{4}\Big)^{\frac{1-2\epsilon}{2-2\epsilon}}\nonumber \\
&+(1-\epsilon) \Big(\frac{1-2\epsilon}{4}\Big)^{-\frac{1-2\epsilon}{2-2\epsilon}}
+4 \epsilon  (1-2 \epsilon )^{\frac{1-2 \epsilon }{2\epsilon}}
\Big)+h_{1,\epsilon}.
\end{align}
Thus,
\begin{align}
&C_1= \phi_1(\theta_1)-h(2 \epsilon)- (1-2\epsilon)\log 2 \nonumber\\
=&\log \Big(
\frac{1}{1-\epsilon} 
\Big(\frac{1-2\epsilon}{4}\Big)^{\frac{1-2\epsilon}{2-2\epsilon}}
+(1-\epsilon) \Big(\frac{1-2\epsilon}{4}\Big)^{-\frac{1-2\epsilon}{2-2\epsilon}}\nonumber\\
&
+4 \epsilon  (1-2 \epsilon )^{\frac{1-2 \epsilon }{2\epsilon}}
\Big)-\log 2.
\Label{NNO2} 
\end{align}
Since 
\begin{align}
e^{ -\frac{1}{2\epsilon}h_{1,\epsilon} } 
=4 \epsilon  (1-2 \epsilon )^{\frac{1-2 \epsilon }{2\epsilon}},
\end{align}
we have
\begin{align}
P_{1,Y}=&
\left(
\begin{array}{c}
e^{-\frac{1}{1-\epsilon} h_{3,\epsilon}+(\frac{1}{2(1-\epsilon)}+1) h_{1,\epsilon}- \phi_1(\theta_1)} \\
e^{\frac{1}{1-\epsilon} h_{3,\epsilon}+(-\frac{1}{2(1-\epsilon)}+1) h_{1,\epsilon}- \phi_1(\theta_1)} \\
e^{ (-\frac{1}{2\epsilon}+1) h_{1,\epsilon} - \phi_1(\theta_1)} 
\end{array}
\right)  \\
=&
e^{ h_{1,\epsilon} - \phi_1(\theta_1)} 
\left(
\begin{array}{c}
\frac{1}{1-\epsilon} 
(\frac{1-2\epsilon}{4})^{\frac{1-2\epsilon}{2-2\epsilon}}
 \nonumber \\
(1-\epsilon) (\frac{1-2\epsilon}{4})^{-\frac{1-2\epsilon}{2-2\epsilon}}
 \\
4 \epsilon  (1-2 \epsilon )^{\frac{1-2 \epsilon }{2\epsilon}}
\end{array}
\right)  .
\end{align}
Therefore, 
\begin{align}
\widehat{Q}_{1,X}
=V_1^{-1}P_{1,Y} 
=
e^{ h_{1,\epsilon} - \phi_1(\theta_1)} 
\left(
\begin{array}{c}
\kappa_1
\\
\kappa_2
\\
\kappa_3
\end{array}
\right) ,
\end{align}
where
\begin{align*}
\kappa_1:=&
-\frac{1}{(1-\epsilon)^2} 
(\frac{1-2\epsilon}{4})^{\frac{1-2\epsilon}{2-2\epsilon}}
+(\frac{1-2\epsilon}{4})^{-\frac{1-2\epsilon}{2-2\epsilon}} \\
\kappa_2:=&
\frac{3-2\epsilon}{2(1-\epsilon)^2} 
(\frac{1-2\epsilon}{4})^{\frac{1-2\epsilon}{2-2\epsilon}}   
+\frac{1-2\epsilon}{2}
(\frac{1-2\epsilon}{4})^{-\frac{1-2\epsilon}{2-2\epsilon}}\\
&+ (-2+4\epsilon) (1-2 \epsilon )^{\frac{1-2 \epsilon }{2\epsilon}}
\\
\kappa_3:=&
\frac{1}{2(1- \epsilon)^2}
(\frac{1-2\epsilon}{4})^{\frac{1-2\epsilon}{2-2\epsilon}}
-\frac{1}{2}(\frac{1-2\epsilon}{4})^{-\frac{1-2\epsilon}{2-2\epsilon}}\\
&+2 (1-2 \epsilon )^{\frac{1-2 \epsilon }{2\epsilon}}.
\end{align*}
Taking the limit $\epsilon \to 0$, we have
\begin{align}
\lim_{\epsilon \to 0}\widehat{Q}_{1,X}
=
\frac{2}{5}\left(
\begin{array}{c}
\frac{3}{2} \\
\frac{7}{4}-\frac{2}{e}\\
\frac{2}{e}-\frac{3}{4}
\end{array}
\right) 
=
\left(
\begin{array}{c}
\frac{3}{5} \\
\frac{7}{10}-\frac{4}{5 e}\\
\frac{4}{5 e}-\frac{3}{10}
\end{array}
\right) .
\end{align}

\begin{figure}[htbp]
\begin{center}
  \includegraphics[width=1\linewidth]{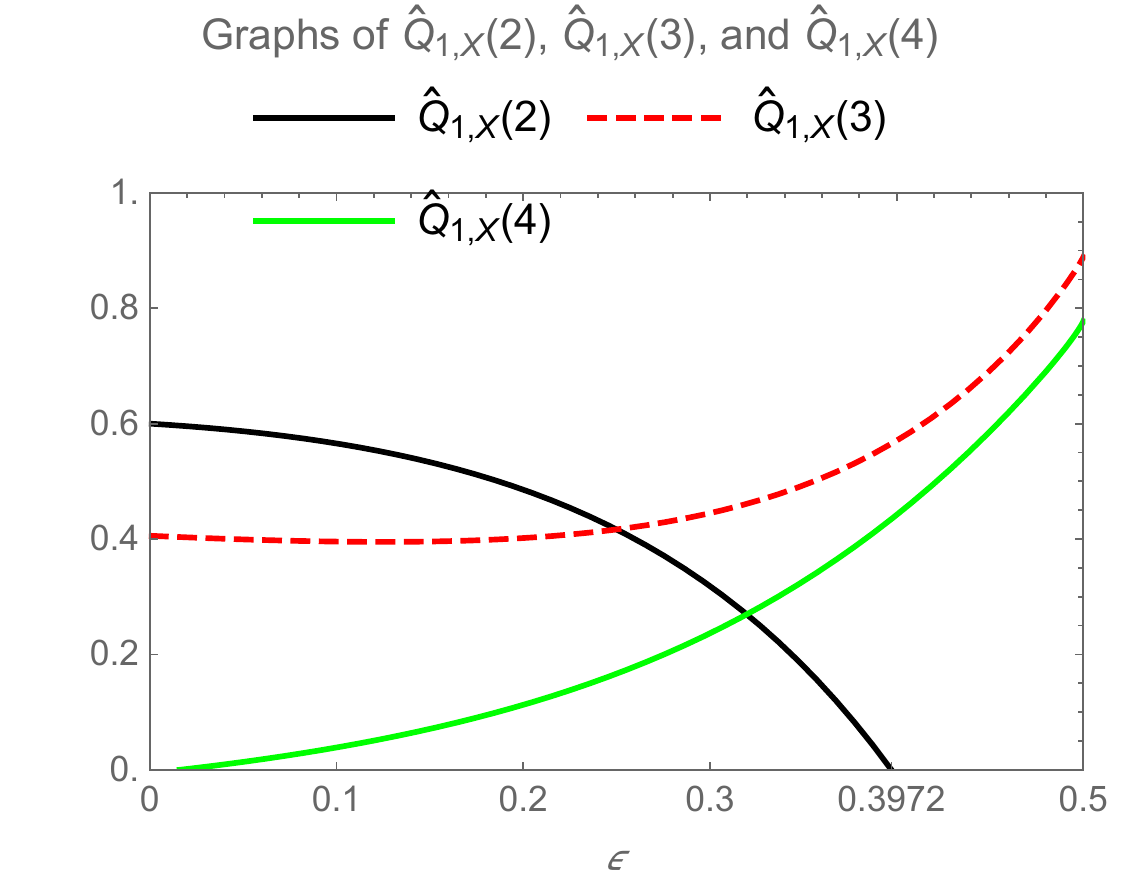}
  \end{center}
\caption{Graph of the function $\widehat{Q}_{1,X}$.
Black solid curve expresses $\widehat{Q}_{1,X}(2)$. 
Red dashed curve expresses $\widehat{Q}_{1,X}(3)$. 
Green solid line expresses $\widehat{Q}_{1,X}(4)$. 
The values $\widehat{Q}_{1,X}(3)$ and $\widehat{Q}_{1,X}(4)$ are always positive.
The value $\widehat{Q}_{1,X}(2)$ is positive only when $\epsilon \le 0.3972$.
}
\Label{GraphQ1}
\end{figure}   

Fig. \ref{GraphQ1} shows numerical plots of $\widehat{Q}_{1,X}(2),$
$\widehat{Q}_{1,X}(3),$ and $\widehat{Q}_{1,X}(4)$. 
Although $\widehat{Q}_{1,X}(3)$ and $\widehat{Q}_{1,X}(4)$ are always positive,
$\widehat{Q}_{1,X}(2)$ is positive only for $\epsilon \ge 0.3972$.
Hence, when $\epsilon < 0.3972$, 
due to Theorem \ref{T-1}, the capacity of case $j=1$ equals the capacity of the channel with inputs $3$ and $4$.
In this case, 
we cannot use Algorithm \ref{protocol1} because the size of the input system is smaller than the size of the output system.
Assume that $P_X(3)=1-p$ and $P_X(4)=p$.
The mutual information between $X$ and $Y$ is
\begin{align}
&h(2\epsilon p)-p h(2\epsilon)\nonumber\\
=&(1-2 \epsilon p )\log 2+ h(2\epsilon p)\nonumber\\
&-(1-p)\log 2 -p (h(2\epsilon) +(1-2 \epsilon) \log 2 )  .
\end{align}
Then, the maximum mutual information is achieved when 
\begin{align*}
p=&
\frac{1}{2\epsilon\big(1+e^{\frac{h(2\epsilon)}{2\epsilon}}\big)} 
=\frac{1}{2\epsilon
(1+(1-2\epsilon)^{-\frac{1-2\epsilon}{2\epsilon}} (2\epsilon)^{-1})} \nonumber \\
=&\frac{1}{
2\epsilon+(1-2\epsilon)^{-\frac{1-2\epsilon}{2\epsilon}} }.
\end{align*}

The capacity of this case is
\begin{align}
&C_{**}:= h\Big(\frac{2\epsilon}{
2\epsilon+(1-2\epsilon)^{-\frac{1-2\epsilon}{2\epsilon}} }
\Big)-\frac{h(2\epsilon)}{
2\epsilon+(1-2\epsilon)^{-\frac{1-2\epsilon}{2\epsilon}} }  \nonumber\\
=& \frac{2\epsilon}{
2\epsilon+(1-2\epsilon)^{-\frac{1-2\epsilon}{2\epsilon}} }
\log \Big(2\epsilon+(1-2\epsilon)^{-\frac{1-2\epsilon}{2\epsilon}} \Big) \nonumber \\
&- 
\Big(\frac{(1-2\epsilon)^{-\frac{1-2\epsilon}{2\epsilon}}}{2\epsilon+(1-2\epsilon)^{-\frac{1-2\epsilon}{2\epsilon}} }\Big) 
\log 
\Big(\frac{(1-2\epsilon)^{-\frac{1-2\epsilon}{2\epsilon}}}{2\epsilon+(1-2\epsilon)^{-\frac{1-2\epsilon}{2\epsilon}} }\Big) \nonumber \\
&- \frac{1-2\epsilon}{2\epsilon+(1-2\epsilon)^{-\frac{1-2\epsilon}{2\epsilon}} } 
\log (1-2\epsilon).
\end{align}
Due to the symmetry, we can discuss the case with $j=2$.

\subsubsection{Derivation of $C(W)$}
Based on the above discussion, we discuss the capacity of the channel $W$ with input system $\X=\{1,2,3,4\}$.
When $0 \le \epsilon \le 0.3588 $, 
$C_{1}$ is the capacity for the case $j=1$, and
$C_*$ is the capacity for the cases $j=3,4$.
Since $C_*\ge C_{1}$ in this case, $C_*$ is the capacity of the channel $W$.

When $0.3588 < \epsilon \le 0.3972$,
$C_{1}$ is the capacity for the case $j=1$, 
$C_*$ is the capacity for the cases $j=3$, and 
$C_4$ is the capacity for the cases $j=4$.
Since $C_4\ge C_*,C_{1}$ in this case, $C_4$ is the capacity of the channel $W$.

In fact, as seen in Fig. \ref{Capacity-en}, 4 curves $C_1$, $C_3$, $C_4$, and $C_{**}$
intersect at $0.3972$.
For $0.3972< \epsilon \le \frac{1}{2}$, 
$C_{**}$ is the capacity for the case $j=1$, 
$C_3$ or $C_*$ is the capacity for the cases $j=3$, and 
$C_4$ is the capacity for the cases $j=4$.
Since $C_{**}\ge C_*,C_3,C_4$ in this case, $C_{**}$ is the capacity of the channel $W$.
Overall, the capacity $C(W)$ of the channel $W$ is calculated as follows.
\begin{align}
C(W)=
\left\{
\begin{array}{ll}
C_* & \hbox{ when } 0 \le \epsilon \le 0.3588 \\ 
C_4 & \hbox{ when } 0.3588 < \epsilon \le 0.3972 \\ 
C_{**} & \hbox{ when } 0.3972 < \epsilon \le 1/2.
\end{array}
\right.
\end{align}

\begin{figure}[htbp]
\begin{center}
  \includegraphics[width=1\linewidth]{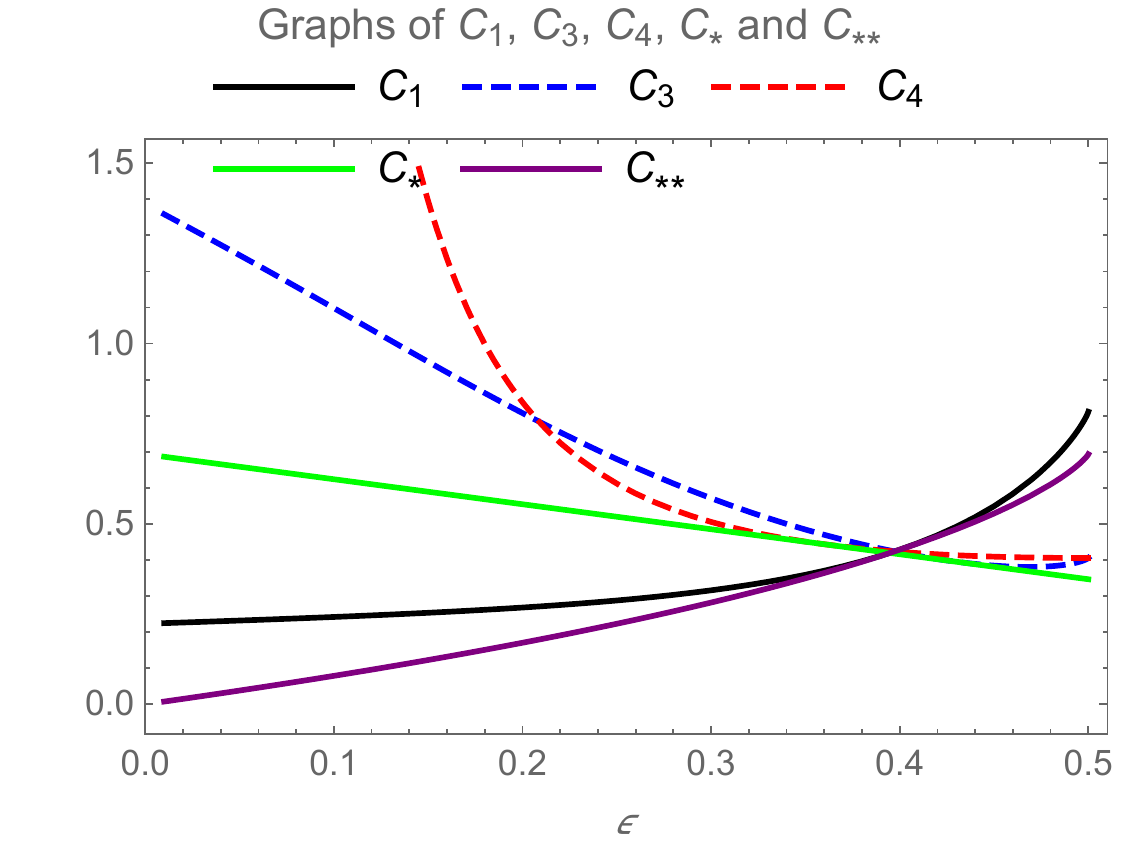}
  \end{center}
\caption{Graphs of functions $C_1,C_3,C_4,C_{*}$ and $C_{**}$.
Black solid curve expresses $C_1$. 
Blue dashed curve expresses $C_3$. 
Red dashed curve expresses $C_4$. 
Green solid line expresses $C_*$. 
Purple solid curve expresses $C_{**}$. 
Its enlarged view is given as Fig. \ref{Capacity-en}.
}
\Label{Capacity-c}
\end{figure}

\begin{figure}[htbp]
\begin{center}
  \includegraphics[width=1\linewidth]{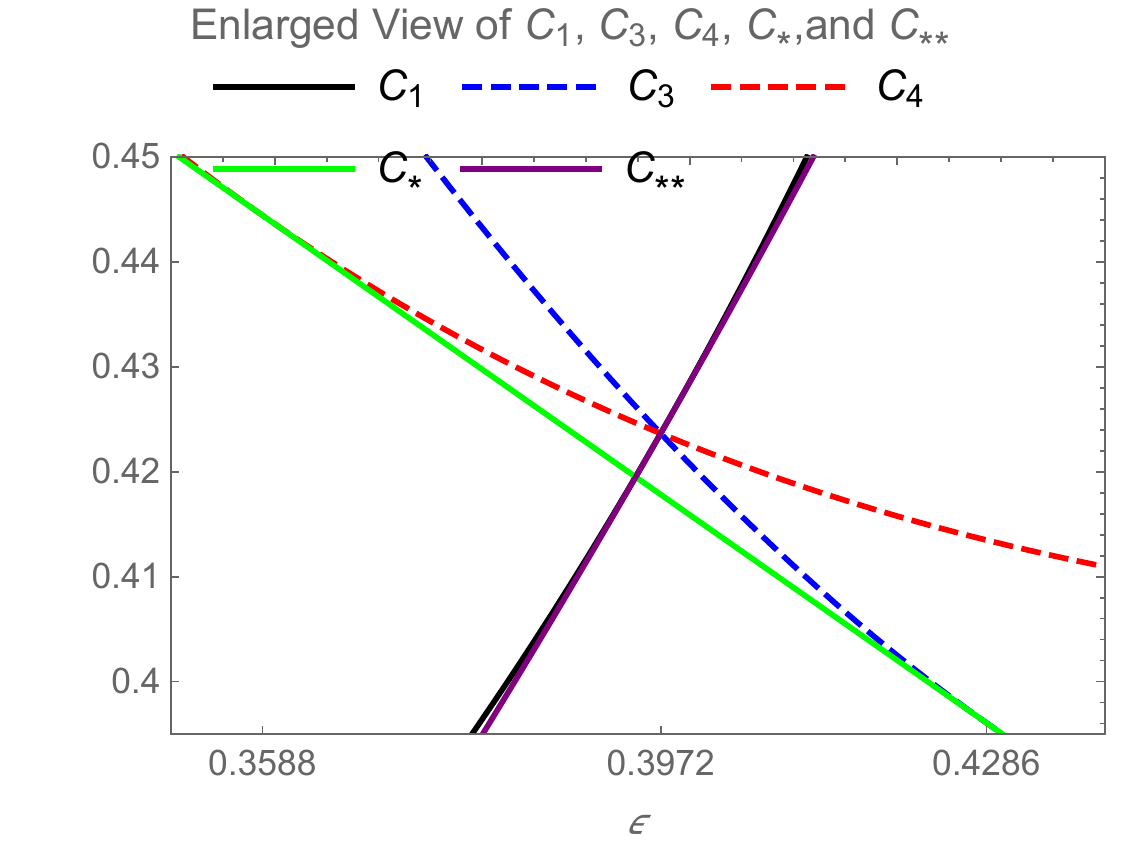}
  \end{center}
\caption{Enlarged view of graphs of functions $C_1,C_3,C_4,C_{*}$ and $C_{**}$.
The explanations for 5 curves are the same as Fig. \ref{Capacity-c}.
4 curves $C_1$, $C_3$, $C_4$, and $C_{**}$ intersect at $0.3972$.
In particular, $C_1$ touches $C_{**}$ at $0.3972$
$C_3$ and $C_4$ touch $C_{*}$ at $0.3588$ and $0.4286$, respectively.
That is, the inequalities $C_1\ge C_{**}$ and $C_3,C_4 \ge C_{*}$ hold always.
}
\Label{Capacity-en}
\end{figure}   

\if0
\begin{align}
V_1^{-1}=&
\left(
\begin{array}{ccc}
-\frac{1}{1-\epsilon} &\frac{1}{1-\epsilon}& 0 \\
\frac{3-2\epsilon}{2(1-\epsilon)} & \frac{1-2\epsilon}{2(1-\epsilon)} &  \frac{-1+2\epsilon}{2\epsilon}\\
\frac{1}{2(1- \epsilon)} & -\frac{1}{2(1- \epsilon)} &  \frac{1}{2\epsilon}\\
\end{array}
\right) .
\end{align}

While the first and second components of $\widehat{Q}_{1,X}$ are always positive values,
the third component has a possibility to have a negative value.
The non-negativity of the third component is equivalent to 
the following condition:
\begin{align}
1 \ge \frac{1+\epsilon -2\epsilon^2}{4\epsilon^2} e^{\frac{h_\epsilon}{\epsilon}}
\Label{AA2}
\end{align}

In \eqref{AA2}, 
the first inequality corresponds to the non-negativity of the third component and
the second inequality corresponds to the non-negativity of the first component.
Figure \ref{} numerically plots 
the function
$g_3(\epsilon):=\frac{1+\epsilon -2\epsilon^2}{4\epsilon^2} e^{\frac{h_\epsilon}{\epsilon}}$.
It shows that $\widehat{Q}_{1,X}$ is a probability distribution when 
$?? \le \epsilon \le ??$.
Due to the symmetry, the case with $j=2$ has the same conclusion.

The first component of $\widehat{Q}_{4,X}$ is smaller than the second component.
Hence, when 
\begin{align}
1\ge \frac{1}{2\epsilon} 
(\frac{2^{\frac{1}{2}}}{\epsilon^{\epsilon} (\frac{1}{2}-\epsilon)^{(\frac{1}{2}-\epsilon)}} )^{-\frac{1}{\epsilon}},
\end{align}
the value $C_4$ calculated in \eqref{NNO} shows the capacity 
in the case that the channel is composed of three inputs $\{1,2,3\}$.
However, when $\epsilon$ is close to $0$,
the value $\frac{2^{\frac{1}{2}}}{\epsilon^{\epsilon} (\frac{1}{2}-\epsilon)^{(\frac{1}{2}-\epsilon)}} $
goes to $2$ and
the value $\frac{1}{2\epsilon}$ goes to infinity.
Hence, the first and second components of $\widehat{Q}_{4,X}$ have negative values.
The value $C_4$ is not the capacity.

\fi

\section{Capacity of classical-quantum channel}\Label{S12}
Next, we discuss a classical-quantum channel from the classical system ${\cal X}:=\{1,\ldots, n_1\}$ to the quantum system
${\cal H}$ with dimension $n_2$, which is given as a set of density matrices $\{W_j\}_{j=1}^{n_1}$.
We denote the set of density matrices on ${\cal H}$ by ${\cal S}({\cal H})$.
For density matrices $\rho,\sigma \in {\cal S}({\cal H})$, 
the entropy $H(\rho)$ and the divergence $D(\rho\|\sigma)$ are defined as
\begin{align}
H(\rho):= -\Tr \rho \log \rho, \quad
D(\rho\|\sigma):= \Tr \rho (\log \rho - \log \sigma). \Label{AAB}
\end{align}

Under this classical-quantum channel,
the capacity of classical-quantum channel $W=\{W_j\}_{j=1}^{n_1}$ 
is defined as \cite{Ho1,Ho2,Ho3,SW1}
\begin{align}
C_q(W):=\max_{P \in {\cal P}_{{\cal X}}} \sum_{x \in \X} P(x)  D\bigg( W_x \bigg\| \sum_{x'\in \X} P(x') W_{x'}\bigg),
\end{align}
The capacity of classical-quantum channel 
has the following form \cite{OP,SW2}
\begin{align}
C_q(W)
= \min_{\sigma \in {\cal S}({\cal H}) }\max_{x \in \X} D( W_x\| \sigma) ,\Label{MOA2}
\end{align}

Statements similar to statements in Section \ref{S10} can be shown in this case of cq-channel
by using quantum information geometry based on Kubo-Mori-Bogoliubov Fisher information, which is directly linked to quantum relative entropy \eqref{AAB} \cite{Amari-Nagaoka}, \cite[Chapter 7]{Springer}.
Here, for the calculation of $C_q(W)$, 
we consider only the algorithm corresponding to Algorithm \ref{protocol1}.
Hence, we consider the case under the following condition similar to Condition (C).   
\begin{description}
\item[(D)]
$n_1=n_2^2$ and $W_1, \ldots, W_{n_2^2}$ are linearly independent.
\end{description}



We choose $n^2-1$ linearly independent Hermitian matrices $A_1, \ldots, A_{n_2^2-1}$ on ${\cal H}$ such that
\begin{align}
\Tr W_{n_2^2} A_j =0 \Label{CO2}
\end{align}
for $j=1, \ldots, n_2^2-1$.
We define the matrix $(h_{i,j})$
\begin{align}
h_{i,j}:= \Tr W_i A_j.\Label{XPA2}
\end{align}
Due to Condition (D),
the $n_2^2-1$ vectors $\{(h_{x,j})_{j=1}^{n_2^2-1} \}_{x =1}^{n_2^2-1}$ 
are linearly independent.

Given an $n_2^2-1$-dimensional parameter $\theta=(\theta^1, \ldots, \theta^{n_2^2-1})$,
we define the density matrix $\rho_{\theta}$ as
\begin{align}
\rho_{\theta}=\exp \Bigg( \sum_{j=1}^{n_2^2-1} A_j \theta^j-\phi(\theta)  \Bigg),\Label{nat2}
\end{align}
where 
\begin{align}
\phi(\theta):= \log \Tr \exp \Bigg( \sum_{j=1}^{n_2^2-1} A_j \theta^j \Bigg)  \Label{XZP2}.
\end{align}

We have the following theorem.
\begin{theorem}\Label{T2}
Assume that the parameters $\theta^1, \ldots, \theta^{n_2^2-1}$ satisfies
the condition
\begin{align}
\sum_{j=1}^{n_2^2-1}h_{i,j} \theta^j= -H(W_i)+H(W_{n_2^2})\Label{MXP2}
\end{align}
for $i=1, \ldots, n_2^2-1$
Then, we have
\begin{align}
D(W_x\| \rho_{\theta})=\phi(\theta)-H(W_{n_2^2}) 
\end{align}
for any element $x\in {\cal A}$.
\end{theorem}

\begin{proof}
The condition \eqref{MXP2} implies that
\begin{align}
\Tr W_i \sum_{j=1}^{n_2^2-1} A_j \theta^j
=\sum_{j=1}^{n_2^2-1}h_{i,j} \theta^j
=-H(W_i)+H(W_{n_2^2}).
\end{align}

For any element $x (\neq n_2^2) \in \X$, we have
\begin{align}
& D(W_x\| \rho_{\theta})=
\Tr W_x  (\log W_x -\log \rho_{\theta} ) \nonumber \\
=&-H(W_x)
-\Tr W_x  \Bigg(
\sum_{j=1}^{n_2^2-1} A_j \theta^j-\phi(\theta) \Bigg) \nonumber \\
=&-H(W_x)- \big(-H(W_x) +H(W_{n_2^2}) -\phi(\theta) \big)\nonumber \\
=& \phi(\theta)-H(W_{n_2^2}).\Label{CM1}
\end{align}
Also, we have
\begin{align}
& D(W_{n_2^2}\| \rho_{\theta})=
\Tr W_{n_2^2}  (\log W_{n_2^2} -\log \rho_{\theta} ) \nonumber \\
=&-H(W_{n_2^2})
-\Tr W_{n_2^2}  \Bigg(
\sum_{j=1}^{n_2^2-1} A_j \theta^j-\phi(\theta)\Bigg) \nonumber \\
=&-H(W_{n_2^2})-(-\phi(\theta))
=\phi(\theta)-H(W_{n_2^2}).\Label{CM2}
\end{align}
The combination of \eqref{CM1} and \eqref{CM2} implies the desired statement.
\end{proof}

Then, we consider the following condition:
\begin{description}
\item[(E)]
$ D( W_x\| \sigma) $ does not depend on $x \in \X$.
\end{description}

\begin{lemma}\Label{LL4}
When Condition (D) holds, 
only one density matrix $\sigma$ on ${\cal H}$ satisfies the condition (E).
We denote such a density matrix by $\sigma_*$.
\end{lemma}
\begin{proof}
\if0
We choose the parameters $\eta_{j,x}$ as
\begin{align}
\eta_{j,x}:= \Tr W_x A_j
\end{align}
for $j=1, \ldots, n_2^2-1$.
\fi

We have
\begin{align}
D( W_x\| \rho_{\theta})= -H(W_x) -\sum_{j=1}^{n_2^2-1} \theta^j h_{x,j} 
-\phi(\theta).
\end{align}
Condition (E) with $\sigma=\rho_{\theta}$ is rewritten as
\begin{align}
&-H(W_x) -\sum_{j=1}^{n_2^2-1} \theta^j h_{x,j} -\phi(\theta)\nonumber \\
=&-H(W_{n_2^2}) -\sum_{j=1}^{n_2^2-1} \theta^j h_{n_2^2} -\phi(\theta)
=-H(W_{n_2^2}) -\phi(\theta)
\end{align}
for $x=1, \ldots, n_2^2-1$, where
the final equation follows from \eqref{CO2}.
This condition is rewritten as
\begin{align}
-H(W_x)+H(W_{n_2^2})
= -\sum_{j=1}^{n_2^2-1}h_{x,j} \theta^j  \Label{SASC}
\end{align}
for $x=1, \ldots, n_2^2-1$.
Since the matrix $h_{i,j}$ is invertible,
only one vector $\theta=(\theta^1, \ldots, \theta^{n_2^2-1})  $ 
satisfies \eqref{SASC}, i.e., Condition (E).
\end{proof}

The relation \eqref{MOA2} guarantees that
\begin{align}
C_q(W) \le D( W_x\| \sigma_*)\Label{CA}
\end{align}
for any element $x \in \X$.

Due to Theorem \ref{T2}, when $\theta$ satisfies the condition \eqref{MXP2},
the density matrix $\rho_{\theta}$ equals $\sigma_*$.
To construct our algorithm, we add the $n_2^2$-th Hermitian matrix $A_{n_2^2}$
and define $h_{i,j}$ by \eqref{XPA2} for $i,j=1, \ldots, n_2^2$.
To find the input distribution $\widehat{Q}_{X,*} $ to achieve the maximum 
\eqref{MOA2}, 
we consider the equation $ \sum_x W(y|x)\widehat{Q}_{X,*}(x)= \rho_{\theta}$, which can be rewritten as
\begin{align}
\sum_{x\in \X}\widehat{Q}_{X,*}(x)  h_{x,j}
\bigg( = \sum_{x\in \X}\widehat{Q}_{X,*}(x)  \Tr W_x A_j\bigg) &= \Tr \rho_{\theta} A_j . \Label{AB}
\end{align}
If we have 
\begin{align}
\widehat{Q}_{X,*}(x)\ge 0 \hbox{ for }x \in \X,
 \Label{MMR2}
\end{align}
since Lemma \ref{LL4} guarantees that $\rho_{\theta}=\sigma_*$,
\eqref{CA} guarantees that
\begin{align}
D(W_x\| \rho_\theta)=C_q(W), \Label{CXP}
\end{align}
i.e., the solution gives the capacity.

\if0
To find the input distribution $Q_X $ to achieve the maximum 
\eqref{MOA2} from the above miximizer $\rho_{\theta}$, 
we consider the following simultaneous equation
\begin{align}
\begin{aligned}
\sum_{x}Q_X(x)  h_{x,j}
\Big( = \sum_{x}Q_X(x)  \Tr W_x A_j\Big) &= \Tr \rho_{\theta} A_j \\
\sum_{x}Q_X(x) &=1
\end{aligned}\Label{AB}
\end{align}
for $j=1, \ldots,n_2^2-1 $.
Since the number of equation equals $n_2^2$, i.e., $n$,
\eqref{AB} uniquely determines the optimal input distribution $Q_X$.
\fi
Therefore, in the same way as Algorithm \ref{protocol1}, 
we propose Algorithm \ref{protocol5} based on Theorem \ref{T2} and Lemma \ref{LL4}.

Now, we describe two Hermitian matrices $X,Y$on $\H$ 
by two $n^2$-dimensional vectors $x=(x_j)_{j=1}^{n_2^2}$ and $y=(y_j)_{j=1}^{n_2^2}$ 
as follows.
\begin{align}
&X\nonumber \\
=& \sum_{j=1}^{n_2} x_j|j\rangle \langle j| \nonumber \\
&+ \sum_{j=1}^{n_2-1} \sum_{j'=1}^{j-1}
 \frac{x_{n_2+  j(j-1)/2+j' }}{\sqrt{2}}
(|j\rangle \langle j'|+|j'\rangle \langle j|)\nonumber \\
&+ \sum_{j=1}^{n_2-1} \sum_{j'=1}^{j-1}
 \frac{x_{n_2(n_2+1)/2+  j(j-1)/2+j' }}{\sqrt{2}}
(i |j\rangle \langle j'|-i |j'\rangle \langle j|).
\end{align}
Here, the matrix $Y$ is defined in the same way by using $y=(y_j)_{j=1}^{n_2^2}$. 
Then, we have 
\begin{align}
\Tr XY=
\sum_{j=1}^{n_2^2} x_j y_j.
\end{align}
In this sense, 
$(W_1, \ldots, W_{n_2^2})$
and 
$(A_1, \ldots, A_{n_2^2})$ can be considered as $n_2^2\times n_2^2$ matrices.
Then, Step 1 of Algorithm \ref{protocol5} can be done by calculating the inverse matrix of the matrix corresponding to 
$(W_1, \ldots, W_{n_2^2})$.
Hence,  
Step 1 has calculation complexity $O(n_2^6)$.
In Step 2, the calculation of all of $H(W_i)$ needs 
calculation complexity $O(n_1 n_2^3)=O(n_2^5)$.
Hence, Step 2 has calculation complexity $O(n_2^5)$ in total.
In Step 3, 
the calculation of $\sum_{j=1}^{n_2^2-1} A_j \theta^j$ has calculation complexity $O(n_2^4)$.
and the calculation of $ \exp \Big( \sum_{j=1}^{n_2^2-1} A_j \theta^j \Big)$
and its trace 
 has calculation complexity $O(n_2^3)$.
Step 3 has calculation complexity $O(n_2^4)$ in total.
In Step 5, the calculation of all of $\Tr \rho_\theta A_j$
 has calculation complexity $O(n_1 n_2^2)=O(n_2^4)$
since $ \exp \Big( \sum_{j=1}^{n_2^2-1} A_j \theta^j \Big)$
and its trace are already calculated.
Hence, the total calculation complexity is $O(n_2^6)$.
\begin{algorithm}
\caption{Exact algorithm for classical channel capacity}
\Label{protocol5}
\begin{algorithmic}
\STATE {Step 1: Choose $A_1, \ldots, A_{n_2^2}$ such that $h_{i,j}$ is the identity matrix.} 
\STATE {Step 2: Set the parameter $\theta^i= -H(W_i)+H(W_{n_2^2})$ for $i=1, \ldots, n_2^2-1$, 
which is the solution of \eqref{MXP2}.}
\STATE {Step 3: Calculate $\phi(\theta)$ by using \eqref{XZP2}.}
\STATE {Step 4: Calculate $\widehat{Q}_{X,*}(x):= \Tr \rho_\theta A_j$,
where $\rho_{\theta}$ is calculated by \eqref{nat2}.}
\STATE {Step 5: If the condition \eqref{MMR2} holds,
we consider that \eqref{CXP} holds and
output $\phi(\theta)-H(W_n)$ as the capacity.
Otherwise, we output ``the capacity cannot be computed.''}
\end{algorithmic}
\end{algorithm}

\section{Comparison}\Label{SC}
In the calculation of the capacity of classical channel,
when an error $\epsilon$ is allowed,
the conventional method \cite{Arimoto,Blahut} has calculation amount 
$O(\frac{n_1 n_2  \log n_1 }{\epsilon})$ because 
each iteration has calculation amount $n_1 n_2 $ and the number of iterations is 
$O(\frac{ \log n_1 }{\epsilon})$.
While it is smaller  than our method (Algorithm \ref{protocol1}) when $n_1=n_2$,
our method derives the exact value of the maximum without iteration.

When only Condition (A) holds, we can consider to solve the minimization \eqref{MIH} due to Theorem \ref{TL}.
However, it is difficult to analytically solve \eqref{MIH} in general.
Since this method needs larger calculation amount to obtain 
$\theta^1,\ldots, \theta^{n_1-1}$,
the algorithm based on Theorem \ref{TL} does not have advantage over 
the conventional method \cite{Arimoto,Blahut} except for the case
that the minimization \eqref{MIH} is analytically solved.

Next, we compare Algorithm \ref{protocol5} with existing algorithms for the capacity of a classical-quantum channel.
The algorithm by \cite{Nagaoka,RISB}
has calculation complexity $ O(\frac{ (n_1 n_2^2+n_2^3)\log n_1}{\epsilon}+ n_1 n_2^3)$
The algorithm by \cite{Sutter}
has calculation complexity $ O(\frac{ \max(n_1 ,n_2) n_2^3 \sqrt{\log n_1}}{\epsilon})$.
Unfortunately, these existing algorithms are smaller than our method, Algorithm\ref{protocol5} when $n_1=n_3(n_2-1)+1$ or $n_1=n_2^2$.
However, our method derives the exact value of the maximum without iteration
when we calculate the inverse matrix exactly.
This point is an advantage over existing methods.

Indeed, in practice, to evaluate the precision of our algorithm,
we need to evaluate the precision for each step including the calculations of the inverse matrix, logarithm, and exponential.
Such an analysis is left for a future study.

\if0
Although our algorithm requires Condition (D),
our method can be applied to the case with the modified condition (D') by relaxing 
the first condition in (D) in the same way as the above case.
\fi

\section{Conclusion and future study}\Label{SF}
We have proposed an exact algorithm to calculate the channel capacities of classical and classical-quantum channels.
However, we have various conditions to apply our algorithm.
Therefore, it is a future problem to remove conditions.
Indeed, Toyota \cite{Shoji} studied information geometrical structure \cite{Amari-Nagaoka}
for Arimoto-Blahut algorithm for 
the capacity of a classical channel.
Hence, it is an interesting topic to derive an information theoretical characterization of our method.

Further, it is a challenging problem to extend our method to 
the maximization of Gallager's function, i.e., R\'{e}nyi mutual information, including classical-quantum setting, which is related to the exponential decreasing rate \cite{Gallager,FNY} of the decoding error probability 
and the strong converse exponent \cite{Arimoto2,ON,MO}.
As another future study, we can consider an extension of our algorithm to wire-tap channel capacity \cite{Wyner,CK79,Yasui,YSM}.


\section{Additional discussion}
After completing the review process of this paper,
the author found the reference \cite{Muroga} that has already derived 
an analytical calculation method for the channel capacity  
under a certain condition.
The reference \cite{NN} derived the same method as \cite{Muroga}.
The method by \cite{Muroga,NN} is the following;
First, assume $n_1=n_2$ and
the existence of the inverse matrix 
$(g_{x,x'})_{x,x'}$
of the transition matrix $(W(y|x))_{x,y}$, i.e.,
$\sum_{x'}g_{x,x'} W(y|x')=\delta_{y,x}$.
Then, the capacity is calculated as
\begin{align}
C=\log \Big( \sum_{x'=1}^{n_1}G_{x'} \Big),
\end{align}
where
\begin{align}
G_{x'}:= \sum_{x,y} g_{s,x'}W(y|x)\log W(y|x).
\end{align}
In addition, the input distribution $P_*$ realizing the capacity 
is given as
\begin{align}
P_*(x)= \exp(-C) \sum_{x'=1}^{n_1}g_{x,x'}\exp(G_{x'}).
\end{align}

Our method has the following advantage over the above method.
First, our method works even with classical-quantum channel
while their method works only with classical channel.
Second, their method needs to assume the existence of 
 the inverse matrix of the transition matrix $(W(y|x))_{x,y}$.
Although Algorithm \ref{protocol1} requires the existence of 
 the inverse matrix of the transition matrix $(W(y|x))_{x,y}$ in Step 1,
 our method can relax this condition in the following way
 because it is sufficient to find functions $f_1,\ldots, f_{n_1-1}$
 satisfying the conditions \eqref{CO} and \eqref{XPA}
 with $h_{i,j}=\delta_{i,j}$ and $n_2=n_1$.
 
Now, instead of the existence of 
 the inverse matrix of the transition matrix $(W(y|x))_{x,y}$,
we assume the existence of the inverse matrix of the matrix
$(W(y|x)- \frac{W(n_1|x)W(y|n_1)}{W(n_1|n_1)}
)_{x,y=1, \ldots, n_1-1} $
by $c_{j,y}$, i.e., $ \sum_{y=1}^{n_1-1}c_{j,y}
(W(y|x)- \frac{W(n_1|x)W(y|n_1)}{W(n_1|n_1)}
)=\delta_{x,j}$.
Then, we set $f_{1}, \ldots f_{n_1-1}$ as
$f_{j}(y)=c_{j,y}$ for $y=1, \ldots, n_1-1$,
$f_{j}(n_1)=-\sum_{y=1}^{n_1-1}
c_{j,y}\frac{W(y|n_1)}{W(n_1|n_1)}$, 
and 
$f_{j}(y)=0$ for $y=n_1, \ldots, n_2$.
We find that the functions $f_1,\ldots, f_{n_1-1}$
 satisfy the conditions \eqref{CO} and \eqref{XPA}
 with $h_{i,j}=\delta_{i,j}$.
Since
the existence of the inverse matrix of the matrix
$(W(y|x)- \frac{W(n_1|x)W(y|n_1)}{W(n_1|n_1)}
)_{x,y=1, \ldots, n_1-1} $
is a weaker condition than the existence of 
 the inverse matrix of the transition matrix $(W(y|x))_{x,y}$,
our method is better than the method by \cite{Muroga} even for the classical channel.

\section*{Acknowledgments}
The author is very grateful to Mr. Shoji Toyota for helpful discussions.

\appendices
\section{Summary for information geometry}\Label{SUMA}
To show Theorems \ref{T0} and \ref{T-1}, we summarize basic knowledge for 
information geometry, which was established in the reference \cite{Amari-Nagaoka}.
The following contents are used in Appendices \ref{A2} and \ref{A3}.
Given a finite probability space $\X$, 
we define an exponential family as follows.
Consider $l$ linearly independent random variables $f_1, \ldots, f_l$ on $\X$.
We define the distribution $P_{\theta,X}$ as
\begin{align}
P_{\theta,X}(x):=  e^{\sum_{j=1}^l \theta^j f_j(x) -\phi(\theta)},
\end{align}
where $\phi(\theta):= \log \sum_{x \in \X} e^{\sum_{j=1}^l \theta^j f_j(x)}$.
The set ${\cal E}:=\{P_{\theta,X} | \theta \in \mathbb{R}^l\}\subset {\cal P}_{\X}$ is called
an exponential family generated by random variables $f_1, \ldots, f_l$.
Also, the set
\begin{align}
{\cal M}:= \{ Q_X\in {\cal P}_{\X}| Q_X \hbox{ satisfies } \eqref{XXP}. \}
\end{align}
is called the mixture family generated by the constraint
\begin{align}
\sum_{x \in \X} f_j(x) Q_X(x)=a_j \Label{XXP}.
\end{align}
The following is a typical example of a mixture family.
For a subset $\X_0\subset \X$, as a generalization of ${\cal M}_0$, 
we define the mixture family ${\cal M}_{\X_0}$ as
\begin{align}
&{\cal M}_{\X_0}\nonumber \\
:= &\Big\{ Q_Y \in \P_{\Y}\Big|
Q_Y= \sum_{x \in \X\setminus \X_0 } c(x)W_{x}, ~
\sum_{x \in \X\setminus \X_0 } c(x)=1\Big\}.
\end{align}
When $\X_0$ is the empty set, ${\cal M}_{\X_0}$ coincides with ${\cal M}_{0}$.
Also, we simplify ${\cal M}_{\{x\}}$ to ${\cal M}_x$.

The following is known as Pythagorean theorem \cite{Amari-Nagaoka}.
\begin{theorem}\Label{TH6}
There uniquely exists an element $P_{X,*} \in {\cal E} \cap {\cal M}$.
Any elements $P_{X,1} \in {\cal M}$ and $P_{X,2} \in {\cal E}$ satisfy
\begin{align}
D( P_{X,1} \| P_{X,2})=D( P_{X,1} \| P_{X,*})+D( P_{X,*} \| P_{X,2}).
\end{align}
\end{theorem}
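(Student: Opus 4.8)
The plan is to first establish existence and uniqueness of $P_{X,*} \in {\cal E} \cap {\cal M}$, and then verify the Pythagorean identity by a direct computation exploiting the dual flatness of the exponential and mixture coordinates. For existence and uniqueness, I would observe that minimizing $D(Q_X \| P_{X,2})$ over $Q_X \in {\cal M}$ is a strictly convex optimization problem on the (relatively) closed convex set ${\cal M}$, so the minimizer is unique; call it $P_{X,*}$. Writing out the stationarity (KKT) conditions for this minimization with the affine constraints $\sum_x f_j(x) Q_X(x) = a_j$ and $\sum_x Q_X(x) = 1$, I would show that at the optimum $\log \frac{P_{X,*}(x)}{P_{X,2}(x)}$ is an affine combination of $f_1(x), \ldots, f_l(x)$ and a constant, which is exactly the statement that $P_{X,*} \in {\cal E}$ (here using that $P_{X,2} \in {\cal E}$ as well, so the two exponential representations combine). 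Conversely any point of ${\cal E} \cap {\cal M}$ satisfies these stationarity conditions, giving uniqueness in ${\cal E} \cap {\cal M}$ and identifying the two characterizations.

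For the identity itself, the key step is the following bilinear cancellation. For $P_{X,1} \in {\cal M}$ and $P_{X,2} \in {\cal E}$, write
\begin{align}
D(P_{X,1}\|P_{X,2}) - D(P_{X,1}\|P_{X,*}) - D(P_{X,*}\|P_{X,2})
= \sum_{x} \big(P_{X,1}(x) - P_{X,*}(x)\big)\log\frac{P_{X,*}(x)}{P_{X,2}(x)}.
\end{align}
Now I would use the two structural facts: since $P_{X,*}, P_{X,2} \in {\cal E}$, the log-ratio $\log\frac{P_{X,*}(x)}{P_{X,2}(x)}$ equals $\sum_{j=1}^l (\theta_*^j - \theta_2^j) f_j(x) + \text{const}$; and since $P_{X,1}, P_{X,*} \in {\cal M}$, the difference $P_{X,1} - P_{X,*}$ has vanishing total mass and satisfies $\sum_x (P_{X,1}(x) - P_{X,*}(x)) f_j(x) = a_j - a_j = 0$ for every $j$. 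Substituting the first into the sum and applying the second term by term makes the right-hand side vanish, which is the claim.

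The main obstacle is really the first part — pinning down existence of $P_{X,*}$ in ${\cal E} \cap {\cal M}$ rather than merely in the closure of ${\cal E}$. If $P_{X,1} \in {\cal M}$ has full support (e.g.\ $P_X$ has full support and ${\cal M}$ meets the interior of the simplex), the minimizer of $D(\cdot\|P_{X,2})$ over ${\cal M}$ lies in the relative interior and the stationarity argument above puts it genuinely in ${\cal E}$; degenerate cases where the constraints force zeros on the boundary need a short separate remark or are excluded by the standing nondegeneracy assumptions. Once existence is secured, uniqueness and the Pythagorean identity are the straightforward computations sketched above. Since this theorem is quoted from \cite{Amari-Nagaoka}, it would also be acceptable to simply cite it, but the self-contained argument above is short enough to include.
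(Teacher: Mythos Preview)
Your argument is correct and is essentially the standard proof of the Pythagorean theorem for dually flat manifolds; the bilinear cancellation you wrote down is exactly the heart of the matter, and your handling of existence via the strictly convex minimization with KKT characterization is the usual route. Note, however, that the paper does not supply its own proof of this statement: it is quoted as a known result with a citation to \cite{Amari-Nagaoka}, so there is nothing to compare against beyond observing that your self-contained sketch reproduces the argument found in that reference.
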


Using this theorem, we can show the following corollaries.
\begin{corollary}\Label{Cor1}
Given a distribution $Q_X $ on $\X$, there uniquely exists an element 
$Q_{X,*} \in {\cal M} $ such that
\begin{align}
D( P_{X,1} \| Q_{X})=D( P_{X,1} \| Q_{X,*})+D( Q_{X,*} \| Q_{X}) \Label{XOT}
\end{align}
for any element $P_{X,1} \in {\cal M}$.
$Q_{X,*}$ is called the projection of $Q_X $ to ${\cal M}$, and is denoted by 
$\Gamma_{{\cal M}}^{(m)}(Q_X)$.
\end{corollary}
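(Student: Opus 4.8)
The plan is to derive this statement directly from the Pythagorean theorem (Theorem \ref{TH6}), taking as the relevant exponential family the one obtained by exponentially tilting the given point $Q_X$ along the random variables that define ${\cal M}$.

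First I would reduce to the case where $Q_X$ has full support: if $\supp(Q_X)\subsetneq\X$, then any $P_{X,1}\in{\cal M}$ not supported inside $\supp(Q_X)$ makes all three divergences in \eqref{XOT} infinite and the identity trivial, while the remaining distributions are handled by restricting every distribution to $\supp(Q_X)$. Assuming now $Q_X(x)>0$ for all $x$, let $f_1,\ldots,f_l$ be the (linearly independent) random variables generating ${\cal M}$ through the constraints \eqref{XXP}, and set
\begin{align}
{\cal E}_{Q_X}:=\Big\{\, Q_X(x)\,e^{\sum_{j=1}^l \theta^j f_j(x)-\phi(\theta)} \;\Big|\; \theta\in\mathbb{R}^l \,\Big\},\qquad
\phi(\theta):=\log\sum_{x\in\X}Q_X(x)\,e^{\sum_{j=1}^l \theta^j f_j(x)}.
\end{align}
Then $Q_X\in{\cal E}_{Q_X}$ (the parameter $\theta=0$), and ${\cal E}_{Q_X}$ is precisely the exponential family dual to the mixture family ${\cal M}$ in the sense required by Theorem \ref{TH6}.

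Next I would apply Theorem \ref{TH6} with ${\cal E}={\cal E}_{Q_X}$. It furnishes a unique $Q_{X,*}\in{\cal E}_{Q_X}\cap{\cal M}$ with $D(P_{X,1}\|P_{X,2})=D(P_{X,1}\|Q_{X,*})+D(Q_{X,*}\|P_{X,2})$ for all $P_{X,1}\in{\cal M}$ and $P_{X,2}\in{\cal E}_{Q_X}$. Specializing to $P_{X,2}=Q_X$ gives exactly \eqref{XOT}, and $Q_{X,*}\in{\cal M}$ by construction; this establishes existence. For uniqueness, if $Q_{X,*}$ and $Q_{X,*}'$ both lie in ${\cal M}$ and satisfy \eqref{XOT}, substitute $P_{X,1}=Q_{X,*}'$ into the identity for $Q_{X,*}$ and $P_{X,1}=Q_{X,*}$ into the identity for $Q_{X,*}'$; adding the two resulting equations cancels $D(Q_{X,*}\|Q_X)+D(Q_{X,*}'\|Q_X)$ and leaves $D(Q_{X,*}'\|Q_{X,*})+D(Q_{X,*}\|Q_{X,*}')=0$, so nonnegativity of the divergence forces $Q_{X,*}=Q_{X,*}'$.

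The only delicate point is the bookkeeping of hypotheses: one must verify that ${\cal M}$ as defined by \eqref{XXP} is the mixture family dual to ${\cal E}_{Q_X}$ — i.e. that the same $f_j$'s appear and that the affine constraint set meets the open probability simplex, so that $Q_{X,*}$ is a genuine full-support distribution and every divergence above is finite. This is exactly why the reduction to full support of $Q_X$ is carried out at the outset. Granting that, the corollary is an immediate specialization of the Pythagorean identity, with no further computation required.
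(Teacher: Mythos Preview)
Your proposal is correct and matches the paper's intended approach: the paper does not spell out a proof of Corollary~\ref{Cor1} but simply presents it as a direct consequence of Theorem~\ref{TH6}, and the natural way to extract it is exactly what you do --- build the exponential family through $Q_X$ using the same generators $f_1,\ldots,f_l$ that define ${\cal M}$, apply the Pythagorean identity, and specialize $P_{X,2}=Q_X$. Your added uniqueness argument and the reduction to full support are reasonable bits of housekeeping that the paper leaves implicit.
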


\begin{corollary}\Label{Cor2}
Given a distribution $Q_X $ on $\X$, there uniquely exists an element 
$Q_{X,*} \in {\cal E} $ such that
\begin{align}
D( Q_{X} \| P_{X,2})=D( Q_{X} \| Q_{X,*})+D( Q_{X,*} \| P_{X,2}) \Label{XOT}
\end{align}
for any element $P_{X,2} \in {\cal E}$.
$Q_{X,*}$ is called the projection of $Q_X $ to ${\cal E}$, and is denoted by $\Gamma_{{\cal E}}^{(e)}(Q_X)$.
\end{corollary}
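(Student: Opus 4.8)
\textbf{Proof sketch for Corollary~\ref{Cor2}.}
The plan is to deduce the $e$-projection from the Pythagorean theorem (Theorem~\ref{TH6}) by taking, as the mixture family in that theorem, the one that passes through the source distribution $Q_X$ — the mirror image of what one does to obtain the $m$-projection of Corollary~\ref{Cor1}. Here ${\cal E}=\{P_{\theta,X}\mid \theta\in\mathbb{R}^l\}$ is the exponential family generated by $f_1,\ldots,f_l$ with base distribution $P_X$.

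First I would set $a_j:=\sum_{x\in\X}f_j(x)Q_X(x)$ for $j=1,\ldots,l$ and let ${\cal M}'$ be the mixture family generated by the constraint \eqref{XXP} with these values. By construction $Q_X\in{\cal M}'$, so ${\cal M}'$ is a nonempty mixture family, and it is built from the \emph{same} functions $f_1,\ldots,f_l$ that generate ${\cal E}$; hence Theorem~\ref{TH6} applies to the pair $({\cal E},{\cal M}')$ and yields a unique $Q_{X,*}\in{\cal E}\cap{\cal M}'$ such that $D(P_{X,1}\|P_{X,2})=D(P_{X,1}\|Q_{X,*})+D(Q_{X,*}\|P_{X,2})$ for all $P_{X,1}\in{\cal M}'$ and $P_{X,2}\in{\cal E}$.

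Next, since $Q_X\in{\cal M}'$, specializing $P_{X,1}=Q_X$ in this identity gives exactly $D(Q_X\|P_{X,2})=D(Q_X\|Q_{X,*})+D(Q_{X,*}\|P_{X,2})$ for every $P_{X,2}\in{\cal E}$, which is the asserted equation; this proves existence. For uniqueness of the characterizing property, suppose $Q_{X,*}'\in{\cal E}$ also satisfies it. Putting $P_{X,2}=Q_{X,*}'$ in the identity for $Q_{X,*}$ and $P_{X,2}=Q_{X,*}$ in the identity for $Q_{X,*}'$, then adding the two equations, the terms $D(Q_X\|Q_{X,*})$ and $D(Q_X\|Q_{X,*}')$ cancel and one is left with $0=D(Q_{X,*}\|Q_{X,*}')+D(Q_{X,*}'\|Q_{X,*})$; nonnegativity and definiteness of the divergence then force $Q_{X,*}=Q_{X,*}'$.

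The main point needing care is that ${\cal M}'$ qualify as a mixture family to which Theorem~\ref{TH6} is applicable, i.e. that ${\cal E}\cap{\cal M}'\ne\emptyset$ — otherwise $Q_{X,*}$ need not be well defined. This is not a genuine obstacle here: the constants $a_j$ are realized by $Q_X$ itself, so ${\cal M}'$ is non-degenerate and, under the support and regularity conditions implicit in Theorem~\ref{TH6}, the intersection it guarantees exists. Beyond this, the argument is just book-keeping with Pythagorean identities.
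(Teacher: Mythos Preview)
Your proof is correct and follows exactly the route the paper intends: the paper states that Corollaries~\ref{Cor1} and~\ref{Cor2} are direct consequences of Theorem~\ref{TH6} without spelling out the argument, and your construction of the mixture family ${\cal M}'$ through $Q_X$ (with $a_j=\sum_x f_j(x)Q_X(x)$) followed by an application of Theorem~\ref{TH6} to the pair $({\cal E},{\cal M}')$ is the standard way to carry this out. Your uniqueness argument via summing the two Pythagorean identities is also clean and correct.
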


Now, we consider a one-parameter exponential family $\{P_t\}$.
\begin{lemma}\Label{L2}
For $t_1\le t_2\le t_3$, we have
\begin{align}
D(P_{t_1}\| P_{t_2} )+ D(P_{t_2}\| P_{t_3} ) \le
D(P_{t_1}\| P_{t_3} ).\Label{XAR}
\end{align}
\end{lemma}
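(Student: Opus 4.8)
The plan is to make the one-parameter exponential family explicit and to reduce \eqref{XAR} to the convexity of its log-partition function. In the notation of this appendix (with $l=1$), write $P_t(x)=P_X(x)e^{tf(x)-\phi(t)}$, where $f=f_1$ is the generating random variable and $\phi(t)=\log\sum_{x\in\X}P_X(x)e^{tf(x)}$. I will use two standard facts about $\phi$: first, $\phi'(t)=\sum_{x\in\X}P_t(x)f(x)$, the expectation parameter attached to $P_t$; second, $\phi''(t)=\sum_{x\in\X}P_t(x)f(x)^2-\big(\sum_{x\in\X}P_t(x)f(x)\big)^2\ge 0$, so $\phi$ is convex and hence $\phi'$ is non-decreasing.

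Next I would compute a generic divergence inside the family in closed form. From $\log\frac{P_s(x)}{P_t(x)}=(s-t)f(x)-\phi(s)+\phi(t)$, taking the expectation with respect to $P_s$ gives
\begin{align}
D(P_s\|P_t)=(s-t)\phi'(s)-\phi(s)+\phi(t).
\end{align}
Substituting this expression for each of the three divergences in \eqref{XAR}, all $\phi$-terms cancel and a one-line rearrangement yields
\begin{align}
D(P_{t_1}\|P_{t_2})+D(P_{t_2}\|P_{t_3})-D(P_{t_1}\|P_{t_3})=(t_3-t_2)\big(\phi'(t_1)-\phi'(t_2)\big).
\end{align}

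To finish, I invoke the ordering $t_1\le t_2\le t_3$: the factor $t_3-t_2$ is non-negative, and since $t_1\le t_2$ and $\phi'$ is non-decreasing the factor $\phi'(t_1)-\phi'(t_2)$ is non-positive, so the right-hand side is $\le 0$, which is exactly \eqref{XAR}. I do not expect a genuine obstacle here; the only delicate point is the bookkeeping in the closed form for $D(P_s\|P_t)$, namely that the expectation of $f$ appearing there is $\phi'(s)$ (evaluated at the first argument), which is what makes the $\phi'$-terms combine into a single product. As an alternative one could argue information-geometrically, identifying $P_{t_2}$ with the $m$-projection of $P_{t_3}$ onto the mixture family $\{Q:\sum_x f(x)Q(x)=\sum_x f(x)P_{t_2}(x)\}$ via Theorem \ref{TH6} and then using a Pythagorean-type inequality for the point $P_{t_1}$, which lies off that mixture family on the side dictated by the ordering of the $t_i$; but the direct computation above is shorter and self-contained.
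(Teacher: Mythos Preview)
Your proof is correct. The closed form $D(P_s\|P_t)=(s-t)\phi'(s)-\phi(s)+\phi(t)$ is right, the cancellation to $(t_3-t_2)(\phi'(t_1)-\phi'(t_2))$ is accurate, and monotonicity of $\phi'$ via $\phi''\ge 0$ finishes the argument.

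The paper takes a slightly different route: it invokes the integral representation $D(P_t\|P_{t'})=\int_{t'}^{t}J_s(s-t')\,ds$ in terms of the Fisher information $J_s$ of the family, and asserts that this yields \eqref{XAR}. Since $J_s=\phi''(s)$ in a one-parameter exponential family, the two arguments are really the same convexity fact in different clothing: your closed form is what one gets by integrating the paper's expression by parts, and your use of the monotonicity of $\phi'$ is equivalent to the paper's use of $J_s\ge 0$. Your version has the advantage of being fully explicit and self-contained (the paper's ``implies'' step is left to the reader), while the paper's version makes the information-geometric interpretation visible.
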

\begin{proof}
Let $J_t$ be the Fisher information in the one-parameter exponential family $\{P_t\}$.
Then, we have
\begin{align}
D(P_{t}\| P_{t'} )=\int_{t'}^s J_s (s-t')ds. \Label{XNY}
\end{align}
The expression \eqref{XNY} implies \eqref{XAR}.
\end{proof}

\section{Proof of Lemma \ref{LAP}}\Label{A1}
We show this lemma by contradiction. We assume that 
$ D( W_x\| Q_Y) $ depends on $x \in \supp(Q_X)$. Then, 
the set 
$\X_0:= \{x_0 \in \X|  D( W_{x_0}\| Q_Y) < \max_{x \in \X} D( W_x\| Q_Y)  \}$
is not empty.
With a small $\epsilon >0$, we choose $Q_{X,\epsilon}$ as
\begin{align}
Q_{X,\epsilon}(x_0)&:= Q_{X}(x_0)- \frac{\epsilon}{|\X_0|}\\
Q_{X,\epsilon}(x')&:= Q_{X}(x')+ \frac{\epsilon}{|\X \setminus \X_0|}
\end{align}
for $x_0 \in \X_0$ and $x' \in \X \setminus \X_0$.
The above choices of $Q_{X,\epsilon}$ guarantee that
$W\cdot Q_{X,\epsilon}$ is closer to $W_{x'}$ than $W\cdot Q_{X}$ for 
$x' \in \X \setminus \X_0$, which implies
the relation
\begin{align}
D(W_{x'}\| W\cdot Q_{X,\epsilon}) < D(W_{x'}\| W\cdot Q_{X})\Label{ACAI}.
\end{align}
Since $W\cdot Q_{X}$ is closer to $W_{x_0}$ than $W\cdot Q_{X,\epsilon}$
for $x_0 \in \X_0$, we have
\begin{align}
D(W_{x_0}\| W\cdot Q_{X}) 
< D(W_{x_0}\| W\cdot Q_{X,\epsilon}).
\end{align}
Since $D(W_{x_0}\| W\cdot Q_{X}) < D(W_{x'}\| W\cdot Q_{X})$, we can choose 
a sufficiently small $\epsilon>0$ such that 
\begin{align}
 D(W_{x_0}\| W\cdot Q_{X}) 
<& D(W_{x_0}\| W\cdot Q_{X,\epsilon})\nonumber \\
<&
D(W_{x'}\| W\cdot Q_{X}).\Label{CAW}
\end{align}
The relations \eqref{ACAI} and \eqref{CAW} imply
\begin{align}
&\max_{x \in\X}D(W_{x}\| W\cdot Q_{X,\epsilon}) \nonumber \\
<& D(W_{x'}\| W\cdot Q_{X})=\max_{x \in\X} D(W_{x}\| W\cdot Q_{X}).\Label{ASS}
\end{align}
However, $W \cdot Q_{X}$ is the minimizer of \eqref{MOA}, which contradicts 
\eqref{ASS}.

\section{Proof of Lemma \ref{LL2}}\Label{ALL2}
We choose $n_2-1$ linearly independent 
functions $f_1, \ldots, f_{n_2-1}$ on $\Y$ such that
they are not constant function and 
\begin{align} 
\sum_{y \in \Y}W_x(y) f_j(y)=0,\quad 
\sum_{y \in \Y}W_{n_1}(y) f_{j'}(y)=0
\end{align} 
for $j=n_1, \ldots, n_2-1$, $j'=1, \ldots, n_2-1$, and $x=1, \ldots, n_1$.
In fact, the set ${\cal M}_0$ is rewritten as 
\begin{align}
&{\cal M}_0\nonumber \\
=& \Big\{Q_Y\!\in\! \P_{\Y} \Big|
\sum_{y \in \Y}Q_Y(y) f_j(y)=0 \hbox{ for }j=n_1, \ldots, n_2-1
\Big\}.
\end{align} 
Then, 
any distribution on ${\cal Y}$ is parameterized as
\begin{align}
P_{\theta,Y}(y):= e^{\sum_{j=1}^{n_2-1} \theta^j f_j(y) -\phi(\theta)},
\end{align}
where $\phi(\theta):= \log \sum_{y \in \Y} e^{\sum_{j=1}^{n_2-1} \theta^j f_j(y)}$.
For any vector $\theta_1=(\theta^1, \ldots, \theta^{n_1-1})$,
there exist parameters $\theta_2(\theta_1)=
(\theta^{n_1}(\theta_1), \ldots, \theta^{n_2-1}(\theta_1))$
such that $P_{(\theta_1,\theta_2(\theta_1)),Y} \in {\cal M}_0$.
This fact can be shown as follows.
Given a vector $\theta_1$, we define the set
\begin{align}
{\cal G}(\theta_1):=\Big\{\Big(\sum_{y \in \Y} 
P_{(\theta_1,\theta_2),Y}(y) f_j(y) \Big)_{j=n_1}^{n_2-1}
\Big| \theta_2 \in \mathbb{R}^{n_2-n_1}\Big\}.
\end{align}
This set equals the inner of 
the convex hull of $\{ ( f_j(y) )_{j=n_1}^{n_2-1} \}_{y \in \Y}$.
That is, the set ${\cal G}(\theta_1)$ does not depend on $\theta_1 \in \mathbb{R}^{n_1-1}$.
The first equation shows that the origin $(0,\ldots, 0)$ belongs to 
$\cup_{\theta_1 \in \mathbb{R}^{n_1-1}} {\cal G}(\theta_1)$.
Hence, 
the origin $(0,\ldots, 0)$ belongs to 
${\cal G}(\theta_1)$ for an element $\theta_1 \in \mathbb{R}^{n_1-1}$.
Therefore, there exist parameters $\theta_2(\theta_1)=
(\theta^{n_1}(\theta_1), \ldots, \theta^{n_2-1}(\theta_1))$
such that $P_{(\theta_1,\theta_2(\theta_1)),Y} \in {\cal M}_0$.

Then, we choose the parameters $h_{x,j}$ as
\begin{align}
h_{x,j}:= \sum_{y \in \Y}W_x(y) f_j(y)
\end{align}
for $j=1, \ldots, n_1-1$.
Since functions $f_1, \ldots, f_{n_1-1}$ are linearly independent, 
due to Condition (A),
the vectors $\{(h_{x,j})_{j=1}^{n_1-1} \}_{x=1}^{n_1-1}$ 
are linearly independent.

Then, we have
\begin{align}
D( W_x\| P_{\theta,Y})= -H(W_x) -\sum_{j=1}^{n_1-1} \theta^j h_{x,j} 
-\phi(\theta_1,\theta_2(\theta_1)).
\end{align}
Condition (B) with $Q_Y=P_{\theta,Y}$ is rewritten as
\begin{align}
&-H(W_x) -\sum_{j=1}^{n_1-1} \theta^j h_{x,j} -\phi(\theta_1,\theta_2(\theta_1))\nonumber \\
=&-H(W_{n_1}) -\phi(\theta_1,\theta_2(\theta_1))
\end{align}
for $x=1, \ldots, n_1$. This condition is rewritten as
\begin{align}
-H(W_x)+H(W_{n_1})
= \sum_{j=1}^{n_1-1}h_{x,j} \theta^j  \Label{SASC}
\end{align}
for $x=1, \ldots, n_1-1$.
Since the matrix $h_{x,j}$ is invertible,
only one vector $\theta_1=(\theta^1, \ldots, \theta^{n_1-1})  $ 
satisfies \eqref{SASC}, i.e., Condition (B).

\section{Proof of Theorem \ref{T0}}\Label{A2}
Assume the condition (ii). 
For $Q_X(\neq \widehat{Q}_{X,*}) \in \P_{\X}$, we have 
$ \max_{x \in \X} D(W_x\| W\cdot Q_X) > \max_{x \in \X} D(W_x\| W\cdot \widehat{Q}_{X,*})$ 
because $W\cdot Q_X$ and belongs to ${\cal M}_0$ and 
only one element of ${\cal M}_0$ satisfy Condition (B) due to Lemma \ref{LL2}.
Hence, 
$ Q_{Y,*}$ achieves $C(W)$, which implies Condition (i).

Assume the condition (i). 
There exists $Q_X \in P_{\X}$ such that 
$D( W_x\| Q_{Y,*})=\sum_{x\in \X}Q_X(x) D(W_x \| W \cdot Q_X  )$.
For any element $x \in \supp(Q_X)$, we have 
\begin{align}
 D(W_x \| W \cdot Q_X  )=D( W_x\| Q_{Y,*}).\Label{COX}
\end{align}
Then, the distribution $\overline{Q}_{Y,*}:= \Gamma_{{\cal M}_{0}}^{(m)}(Q_{Y,*})$ satisfies 
$ D( W_x\| Q_{Y,*})= D( W_x\| \overline{Q}_{Y,*})+D( \overline{Q}_{Y,*}\| Q_{Y,*})$, 
where
the projection $\Gamma_{{\cal M}_{0}}^{(m)}$ 
is defined in Appendix \ref{SUMA}.
Hence, $ D( W_x\| Q_{Y,*})\ge D( W_x\| \overline{Q}_{Y,*})$.

Since $\min_{Q_Y \in {\cal P}_{\Y} }\max_{x \in \X} D( W_x\| Q_Y)=
\max_{x \in \X} D( W_x\| W \cdot Q_X)$, we have
$ D( W_x\| W \cdot Q_X)
= D( W_x\| Q_{Y,*})= D( W_x\| \overline{Q}_{Y,*})$ for $x \in \supp(Q_X)$.
Hence, $D( \overline{Q}_{Y,*}\| Q_{Y,*})=0$, i.e., $\overline{Q}_{Y,*}= Q_{Y,*}$.
That is, $Q_{Y,*}$ belongs to ${\cal M}_{0}$.
Due to Condition (A) and Lemma \ref{LL2}, 
the condition \eqref{COX} uniquely determines $Q_{Y,*}$.
Hence,  $W \cdot\widehat{Q}_{X,*}=Q_{Y,*}=W \cdot Q_X$.
Condition (A) guarantees the relation $\widehat{Q}_{X,*}=Q_X$, which implies the condition (ii).

\section{Proof of Theorem \ref{T-1}}\Label{A3}
Due to Condition (A), there uniquely exists a distribution $Q_{X,*}
\in \P_{\X}$ to achieve the capacity $C(W)$.
It is sufficient to show that 
$Q_{X,*}(x_0)= 0$ for any element $x_0 \in {\cal N}(\widehat{Q}_{X,*})$.
For this aim, we fix an arbitrary element $x_0 \in {\cal N}(\widehat{Q}_{X,*})$.

\noindent{\bf Step 1:}\quad
We show that there exists a distribution $Q_{Y,0} \in {\cal M}_{x_0}$
such that 
\begin{align}
\max_{x \in \X} D( W_x\| Q_{Y,*}) 
\ge \max_{x \in \X} D( W_x\| Q_{Y,0})
= D( W_{x'}\| Q_{Y,0}) \Label{COC}
\end{align}
for any element $x' \in \X\setminus\{x_0\}$.

Since $Q_{Y,*} $ is the unique element of ${\cal M}_0$ to satisfy Condition B, any element $x' \in \X\setminus\{x_0\}$ satisfies
\begin{align}
D( W_{x_0}\| Q_{Y,*}) =D( W_{x'}\| Q_{Y,*}) \Label{XXM}.
\end{align}
We choose a function $f_{x_0}$ on $\X$ such that
\begin{align}
\sum_{y \in \Y}f_{x_0}(y)W_{x_0}(y) &=1 , \Label{CP1} \\
\sum_{y \in \Y}f_{x_0}(y)W_{x}(y) &=0\Label{CP2}
\end{align}
for any element $x (\neq x_0) \in \X$.
We denote $- \widehat{Q}_{X,*}(x_0)>0$ by $a$.
Then, we have 
\begin{align}
\frac{1}{1+a}Q_{Y,*}+ \frac{a}{1+a} W_{x_0} \in {\cal M}_{x_0}.\Label{CP3}
\end{align}
The combination of \eqref{CP2} and \eqref{CP3} implies that
\begin{align}
\sum_{y \in \Y}f_{x_0}(y) 
(\frac{1}{1+a}Q_{Y,*}(y)+ \frac{a}{1+a} W_{x_0}(y))=0.
\Label{CP4}
\end{align}
Then, the combination of \eqref{CP1} and \eqref{CP4} yields that
\begin{align}
\sum_{y \in \Y}f_{x_0}(y) Q_{Y,*}(y)= -a.\Label{XM1}
\end{align}

The distribution $Q_{Y,0}:=\Gamma_{{\cal M}_{x_0}}^{(m)} (Q_{Y,*}) \in {\cal M}_{x_0} $ 
satisfies
\begin{align}
D(Q_Y \| Q_{Y,*} )= D(Q_Y \| Q_{Y,0} )+D(Q_{Y,0} \| Q_{Y,*} )
\Label{CAS2}
\end{align}
for any distribution $Q_Y \in {\cal M}_{x_0} $.
We define the exponential family 
${\cal E}_1:=\{Q_{Y,t}\}_{t \in \mathbb{R}}$ as
\begin{align}
Q_{Y,t}(y):= Q_{Y,0}(y) e^{t f_{x_0}(y)-\varphi(y)},
\end{align}
where
\begin{align}
\varphi(y):=\log \sum_{y \in \Y} Q_{Y,0}(y) 
e^{t f_{x_0}(y)}.
\end{align}
Hence, $Q_{Y,0}$ coincides with the case with $t=0$.
We choose $t_*$ such that $Q_{Y,t_*}= Q_{Y,*}$.
The relation \eqref{XM1} guarantees that $t_*<0$.
Also, we choose $t_0$ as 
$Q_{Y,t_0}=\Gamma_{{\cal E}_1}^{(e)}(W_{x_0}) $. Then, we have
\begin{align}
D(W_{x_0} \| Q_{Y,t} )= D(W_{x_0} \| Q_{Y,t_0} )+D(Q_{Y,t_0} \| Q_{Y,t} ).
\Label{XLT2}
\end{align}
for any $t_0 \in \mathbb{R}$.
The relation \eqref{XM1} guarantees that $t_0>0$.
Since $t_*<0$ and $t_0>0$, Lemma \ref{L2} yields that
\begin{align}
D( Q_{Y,t_0} \| Q_{Y,0}) \le D( Q_{Y,t_0} \| Q_{Y,t_*})-D( Q_{Y,0} \| Q_{Y,t_*})
. \Label{XLT}
\end{align}

\begin{figure}[htbp]
\begin{center}
 \includegraphics[width=0.95\linewidth]{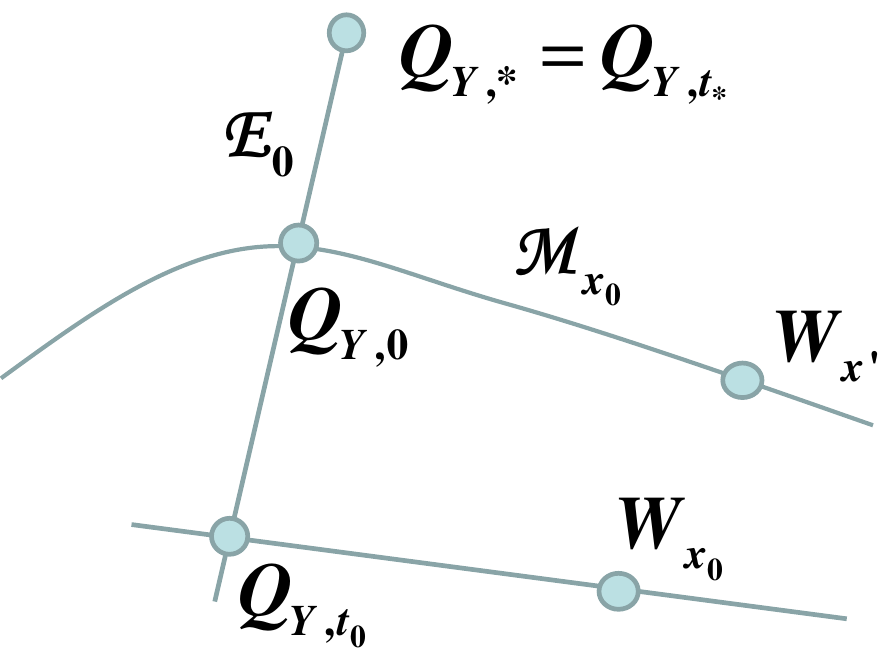}
  \end{center}
\caption{Relation among various distributions appearing in Step 1 of the proof of Theorem \ref{T-1}.
This figure shows the topological relation among the distributions
 $Q_{Y,*}=Q_{Y,t_*}, Q_{Y,0}$, $Q_{Y,t_0}$, 
the exponential family ${\cal E}_0$, and 
the mixture family ${\cal M}_{x_0}$.}
\Label{relation}
\end{figure}

The combination of \eqref{XLT2} and \eqref{XLT} guarantees that
\begin{align}
& D(W_{x_0} \| Q_{Y,0})
\stackrel{(a)}{=} D(W_{x_0} \| Q_{Y,t_0} )+D( Q_{Y,t_0} \| Q_{Y,0}) \nonumber \\
\stackrel{(b)}{\le} &D(W_{x_0} \| Q_{Y,t_0} )+D(Q_{Y,t_0} \| Q_{Y,t_*} )-D( Q_{Y,0} \| Q_{Y,t_*}) \nonumber \\
\stackrel{(c)}{=}&D( W_{x_0} \| Q_{Y,t_*})-D( Q_{Y,0} \| Q_{Y,t_*}) \nonumber \\
\stackrel{(d)}{=}&D( W_{x_0} \| Q_{Y,*})-D( Q_{Y,0} \| Q_{Y,*}) \nonumber \\
\stackrel{(e)}{=}&D( W_{x'} \| Q_{Y,*})-D( Q_{Y,0} \| Q_{Y,*})\stackrel{(f)}{=}D( W_{x'} \| Q_{Y,0})
\Label{CAS1}
\end{align}
for any element $x' \in \X\setminus\{x_0\}$.
Each step is shown in the following way.
Steps $(a)$ and $(c)$ follow from \eqref{XLT2}.
Step $(b)$ follows from \eqref{XLT}.
Step $(d)$ follows from $Q_{Y,t_*}=Q_{Y,*} $.
Step $(e)$ follows from \eqref{XXM}.
Step $(f)$ follows from \eqref{CAS2}.
\eqref{CAS1} shows the following two facts.
One is $D( W_{x'} \| Q_{Y,0})$ does not depend on $x' \in \X\setminus\{x_0\}$.
The other is $D(W_{x_0} \| Q_{Y,0})\le D( W_{x'} \| Q_{Y,0})$.
The combination of these two facts implies
\begin{align}
&\max_{x \in \X} D( W_x\| Q_{Y,0})
=D(W_{x'} \| Q_{Y,0})\nonumber \\
\stackrel{(a)}{\le} &D( W_{x'} \| Q_{Y,*})
\le \max_{x \in \X} D( W_x\| Q_{Y,*}) ,
\end{align}
where Step $(a)$ follows from \eqref{CAS2}.
Hence, we obtain \eqref{COC}.

\noindent{\bf Step 2:}\quad
We choose a function $\widehat{Q}_{X,1}$ on $\X\setminus \{x_0\}$
such that 
\begin{align}
\sum_{x \in\X\setminus \{x_0\}} \widehat{Q}_{X,1}(x) W_x= Q_{Y,1},
\Label{XAP}
\end{align}
where $\widehat{Q}_{X,1}$ uniquely exists because $ Q_{Y,1}\in {\cal M}_{x_0}$.
We show the desired statement $Q_{X,*}(x_0)=0$
when $\widehat{Q}_{X,1}(x)\le 0$ for $x \in \X\setminus \{x_0\}$.

In this case, it is sufficient to show that 
$Q_{X,*}(x_0)=\widehat{Q}_{X,1}$, i.e., 
$\widehat{Q}_{X,1}$ achieves the capacity $C(W)$.
We have
\begin{align}
&\sum_{x \in \X\setminus \{x_0\}} Q_{X,1}(x) D( W_x\| Q_{Y,1})
\stackrel{(a)}{=} \max_{x \in \X\setminus \{x_0\}} D( W_x\| Q_{Y,1}) \nonumber \\
\stackrel{(b)}{=}&\max_{Q_X \in {{\cal P}_{\X \setminus \{x_0\} }} }
\sum_{x\in \X\setminus \{x_0\}}Q_X(x) D(W_x \| W \cdot Q_X  ) \nonumber \\
\stackrel{(c)}{=}&\min_{Q_X \in {{\cal P}_{\X \setminus \{x_0\} }} }
\max_{x\in \X\setminus \{x_0\}} D(W_x \| W \cdot Q_X  ) \nonumber \\
\stackrel{(d)}{=}&\min_{Q_Y \in {\cal M}_{x_0}} 
\max_{x\in \X\setminus \{x_0\}} D(W_x \| Q_Y  ) \nonumber \\
\stackrel{(e)}{\le} &\min_{Q_X \in {{\cal P}_{\X  }} }
\max_{x\in \X\setminus \{x_0\}} D(W_x \| W \cdot Q_X  )  \nonumber\\
{\le} &\min_{Q_X \in {{\cal P}_{\X  }} }
\max_{x\in \X} D(W_x \| W \cdot Q_X  )  \nonumber\\
{\le} &
\max_{x\in \X} D(W_x \| W \cdot Q_{X,1}  ) .
\end{align}
In the above relations, ${\cal P}_{\X \setminus \{x_0\}}$ means
the set of probability distributions on the set $\X \setminus \{x_0\} $.
Each step is shown in the following way. 
Step $(a)$ follows from the second equation in \eqref{COC}.
Step $(b)$ follows from Theorem \ref{T0}.
Step $(c)$ follows from \eqref{MOA}.
Step $(d)$ is shown as follows.
Since $Q_Y \mapsto D(W_x \| Q_Y  ) $ is convex, 
$Q_Y\mapsto \max_{x\in \X\setminus \{x_0\}} D(W_x \| Q_Y  ) $ is
also convex.
Since $Q_{Y,1} $ achieves a local minimum, it also achieve the global 
minimum in ${\cal M}_{x_0}$.

Step $(e)$ is shown as follows.
For $Q_X \in \P_{\X}$, 
the distribution $Q_{Y}':= \Gamma_{{\cal M}_{x_0}}^{(m)}(W\cdot Q_X)$ satisfies
\begin{align}
&D(W_x \| W \cdot Q_X  ) =
D(W_x \| Q_{Y}' ) +D(Q_{Y}' \| W \cdot Q_X  ) \nonumber \\
\ge & D(W_x \| Q_{Y}' ) 
\hbox{ for } x \in \X\setminus \{x_0\},
\end{align}
which shows $(e)$.

\if0
\begin{align}
\min_{Q_Y \in {\cal M}_{x_0}} 
\max_{x\in \X\setminus \{x_0\}} D(W_x \| Q_Y  ) 
\le 
\min_{Q_X \in {{\cal P}_{\X  }} }
\max_{x\in \X\setminus \{x_0\}} D(W_x \| W \cdot Q_X  )  ,\Label{XZTP}
\end{align}
\fi

Hence, we have 
\begin{align}
C(W)=\sum_{x \in \X\setminus \{x_0\}} Q_{X,1}(x) D( W_x\| W \cdot Q_{X,1}).
\end{align}

\noindent{\bf Step 3:}\quad
We show the desired statement $Q_{X,*}(x_0)=0$
when there exists $x_1 \in \X\setminus \{x_0\}$ such that
$Q_{X,1}(x_1)< 0$.
Applying the same discussion as Step 1
with replacing $Q_{Y,*} $ and $Q_{Y,0}$
by $Q_{Y,1} $ and $Q_{Y,2}$, respectively,
 we find that there exists a distribution 
$Q_{Y,2} \in {\cal M}_{\{x_0,x_1\}}$
such that 
\begin{align}
\max_{x \in \X} D( W_x\| Q_{Y,1}) 
\ge \max_{x \in \X} D( W_x\| Q_{Y,2})
= D( W_{x'}\| Q_{Y,2}) \Label{COC2}
\end{align}
for any element $x' \in \X\setminus\{x_0,x_1\}$.
Then, we choose $\widehat{Q}_{X,2}$ in the same way as \eqref{XAP}.
If $\widehat{Q}_{X,2}(x)\ge 0$ for $x \in \X\setminus \{x_0,x_1\}$,
we find that $Q_{X,*}(x_0)=0$ in the same way as Step 2.
Otherwise, 
we repeat the above procedure up to $i$ times until
we have $\widehat{Q}_{X,i}(x)\ge 0$ for $x \in \X\setminus \{x_0,x_1, \ldots, x_{i-1}\}$.
Once we obtain the above condition,
we find $Q_{X,*}(x_0)=0$ in the same way as Step 2.

\section{Proof of Lemma \ref{LX1}}\Label{A4}
To show Lemma \ref{LX1}, we prepare functions $\overline{f}_1,\ldots, \overline{f}_{n_2-1}$ to satisfy 
the condition in Theorem \ref{TL}.
We denote the distribution defined in \eqref{nat} based on these functions $\overline{f}_1,\ldots, \overline{f}_{n_2-1}$
by $\overline{P}_{\theta,Y}$.
Such functions are given as linear combination of the original functions ${f}_1,\ldots, {f}_{n_2-1}$
by using coefficient $a^j_{j'}$ as
\begin{align}
\sum_j {f}_j a^j_{j'}= \overline{f}_{j'}.
\end{align}
Hence, we have
\begin{align}
\sum_{j'=1}^{n_2-1} \overline{f}_{j'}(y) \theta^{j'}=
\sum_{j=1}^{n_2-1} {f}_{j}(y)
\bigg(\sum_{j'=1}^{n_2-1} a^j_{j'} \theta^{j'}\bigg). 
\end{align}
Using this relation, we find that $\overline{P}_{\theta,Y}={P}_{\overline{\theta},Y}$, where
$\overline{\theta}^j=\sum_{j'=1}^{n_2-1} a^j_{j'} \theta^{j'}$.
Thus, the set ${\cal E}_0$ can be characterized with the new functions $\overline{f}_1,\ldots, \overline{f}_{n_2-1}$.
Therefore, without loss of generality, we can assume that 
the functions ${f}_1,\ldots, {f}_{n_2-1}$ satisfies the condition in Theorem \ref{TL}.

We choose $\theta^1,\ldots, \theta^{n_1-1}$ satisfies the condition \eqref{XPA}.
Then, we have 
\begin{align}
&{\cal E}_0\nonumber \\
=&\{P_{ \theta^1,\ldots, \theta^{n_1-1}, \eta^{n_1},\ldots, \eta^{n_2-1},Y}|
(\eta^{n_1},\ldots, \eta^{n_2-1}) \in \mathbb{R}^{n_2-n_1}\}.
\end{align}
Hence, ${\cal E}_0$ is an exponential family generated by ${f}_{n_1},\ldots, {f}_{n_2-1}$.

Since $f_{i,j}=0$ for $j=n_1+, \ldots, n_2-1$,
${\cal M}_0$ can be written as
\begin{align}
&{\cal M}_0\nonumber \\
=&\{ Q_Y \in {\cal P}_{\Y}| \sum_{y \in \Y} f_j(y) Q_Y(y) =0 \hbox{ for } j=n_1, \ldots, n_2-1 \}. \Label{XPY}
\end{align}
Hence, 
${\cal M}_0$ is a mixture family generated by the same functions ${f}_{n_1},\ldots, {f}_{n_2-1}$.
Therefore, Theorem \ref{TH6} implies Lemma \ref{LX1}.

\begin{IEEEbiographynophoto}{Masahito Hayashi}(Fellow, IEEE) was born in Japan in 1971.
He received the B.S.\ degree from the Faculty of Sciences in Kyoto
University, Japan, in 1994 and the M.S.\ and Ph.D.\ degrees in Mathematics from
Kyoto University, Japan, in 1996 and 1999, respectively. He worked in Kyoto University as a Research Fellow of the Japan Society of the
Promotion of Science (JSPS) from 1998 to 2000,
and worked in the Laboratory for Mathematical Neuroscience,
Brain Science Institute, RIKEN from 2000 to 2003,
and worked in ERATO Quantum Computation and Information Project,
Japan Science and Technology Agency (JST) as the Research Head from 2000 to 2006.
He also worked in the Superrobust Computation Project Information Science and Technology Strategic Core (21st Century COE by MEXT) Graduate School of Information Science and Technology, The University of Tokyo as Adjunct Associate Professor from 2004 to 2007.
He worked in the Graduate School of Information Sciences, Tohoku University as Associate Professor from 2007 to 2012.
In 2012, he joined the Graduate School of Mathematics, Nagoya University as Professor.
In 2020, he joined Shenzhen Institute for Quantum Science and Engineering, Southern University of Science and Technology, Shenzhen, China
as Chief Research Scientist.

In 2011, he received Information Theory Society Paper Award (2011) for ``Information-Spectrum Approach to Second-Order Coding Rate in Channel Coding''.
In 2016, he received the Japan Academy Medal from the Japan Academy
and the JSPS Prize from Japan Society for the Promotion of Science.
he is IMS (Institute of Mathematical Statistics) Fellow and 
AAIA (the Asia-Pacific Artificial Intelligence Association) Fellow.

In 2006, he published the book ``Quantum Information: An Introduction''  from Springer, whose revised version was published as ``Quantum Information Theory: Mathematical Foundation'' from Graduate Texts in Physics, Springer in 2016.
In 2016, he published other two books ``Group Representation for Quantum Theory'' and ``A Group Theoretic Approach to Quantum Information'' from Springer.
He is on the Editorial Board of {\it International Journal of Quantum Information}.
His research interests include classical and quantum information theory and classical and quantum statistical inference.
\end{IEEEbiographynophoto}
\end{document}